\newtheorem{lem}{Lemma}
\newtheorem{defn}{Definition}
\newtheorem{thm}{Theorem}
\journal{arXiv.org}
\begin{document}

\begin{frontmatter}

\title{Adaptive Network Dynamics and \\ Evolution of Leadership in Collective Migration}


\author[mae]{Darren Pais}
\ead{dpais@alumni.princeton.edu}
 
\author[mae]{Naomi  E. Leonard\corref{cor1}}
\ead{naomi@princeton.edu}
\cortext[cor1]{Corresponding author.  Tel: 1-609-258-5129.  Fax:  1-609-258-6109.}

\address[mae]{Department of Mechanical and Aerospace Engineering, Princeton University, Princeton, NJ, 08544, U.S.A.}

\begin{abstract}
The evolution of leadership in migratory populations depends not only on costs and benefits of leadership investments but also on the  opportunities for individuals to rely on cues from others through social interactions.   We derive an analytically tractable adaptive dynamic network model of collective migration with  fast timescale migration dynamics and slow timescale adaptive dynamics of individual leadership investment and social interaction.   For large populations, our analysis of bifurcations with respect to investment cost explains the observed hysteretic effect associated with recovery of migration in fragmented environments.   Further, we show a minimum connectivity threshold above which there is evolutionary branching into leader and follower populations.   For small populations, we show how the topology of the underlying social interaction network influences the emergence and location of leaders in the adaptive system.   Our model  and analysis can describe other adaptive network dynamics involving collective tracking or collective learning of a noisy, unknown signal, and likewise can inform the design of robotic networks where agents use decentralized strategies that balance direct environmental measurements with agent interactions.
\end{abstract}

\begin{keyword}
Evolutionary dynamics \sep Adaptive networks \sep Collective migration \sep Leadership \sep Social networks  
\end{keyword}
\end{frontmatter}



\section{Introduction \label{s:intro}}

A great variety of species including birds, fish, invertebrates and mammals engage in collective migration \cite{shaw,insects,herb,birdmig}. The migratory process is often an adaptive response to conditions such as competition for resources in a dynamic environment, seasonal variability, and selection of new habitats for breeding \cite{shaw,guttal,torney,Fagan2012}. On the one hand, animals perform  migratory tasks by leveraging environmental cues such as nutrient and thermal gradients, magnetic fields, odor cues, or visual markers \cite{nav,odor,context}. Measuring these stochastic environmental signals is complicated and requires the investment of time and energy, as well as the development of necessary physiological and sensory machinery such as vision in insects and vertebrates \cite{insects} and chemical signaling in bacteria \cite{cellmig}. On the other hand, animals can perform migrating tasks by leveraging social cues from neighbors (nearby conspecifics) \cite{guttal,torney,Fagan2012}. By imitating invested neighbors (or neighbors of invested neighbors, etc.) using consensus processes such as cohesion and alignment, some animals in a group can migrate well without paying the costs associated with directly measuring and processing the environmental signal. 

The interplay between costly information acquisition from the environment and relatively less expensive social interactions with the group raises two important questions regarding leadership and social interactions in migratory populations.  Here, leadership means the influence of individuals who are informed about the environmental signal, e.g., by investing time and energy in taking a measurement.   Although these informed individuals are referred to as ``leaders'' and the remaining uninformed individuals as ``followers'', it is {\em not} assumed that leaders can be identified or can signal their information to others.

The first important question relates to the migratory performance of large groups in the presence of a limited number of leaders, i.e., can a subset of informed individuals effectively lead a large group? Couzin et al.~\cite{couzin_leader} address this question using individual-based simulations (involving social forces of attraction, repulsion and alignment among individuals) and demonstrate that in a group of socially interacting individuals, a small fraction of informed leaders can effectively determine the direction of travel of a large group of uninformed followers.   

The second important question relates to the evolution of leadership in collective migration, i.e., under what conditions is the coexistence of invested leaders and social followers stable in an evolutionarily sense? This question is especially relevant when the cost of investing in signal acquisition is sufficiently high; followers can leverage the investments made by leaders using social interactions without having to pay the investment costs themselves, but not all individuals can be followers if the group is to migrate successfully.  Guttal and Couzin~\cite{guttal}  address this question (also see related commentary \cite{migcom}) using evolutionary simulations and an individual-based model similar to that used in \cite{couzin_leader}; they show that the specialization of groups into coexisting leaders and followers (also known as branching) is a stable evolutionary outcome. 

Motivated by these questions and results, we develop an analytically tractable model of collective migration with which we can rigorously study the adaptive network dynamics associated with the evolution of collective migration and the emergence of leadership when leadership is costly and social interaction is relatively cheap.  We investigate the influence on group-level outcomes of the (evolving) network topology,  i.e., who is sensing and responding to whom within the group,  as a function of the cost of investing in the environmental signal.   Our model can be generalized to a broader set of adaptive network dynamics associated with a collective task, such as collective tracking or collective learning of a noisy, unknown signal \cite{poulakakisTAC,olshevskySICON}, carried out by agents with decentralized strategies that balance direct environmental measurements with social interactions.  

Torney et al.~\cite{torney} derive a mean-field approximation to the evolutionary model studied in \cite{guttal}, and using tools from evolutionary adaptive dynamics \cite{adyn1,adyn2} prove conditions for the branching of a migrating population into leader and follower groups.  The mean-field approach effectively prescribes an all-to-all social interaction topology between the individuals in order to reduce dimensionality, critical to the analysis in \cite{torney}.   However, the approach ignores the potentially important role of limited social interactions; indeed, it has been shown that network topology plays a critical role in determining outcomes in collective dynamics \cite{poulakakisTAC,olshevskySICON,paleyosc,george1,star1,lead1}.   Our model derives directly from the model of \cite{torney} with a key generalization to the case of limited interaction networks and a modification that allows individual fitnesses to be computed from a linear matrix equation as a function of the network topology  encoded by a  directed graph.

In our model each agent $i$ has a scalar strategy $k_i(t) \in [0,1]$ at time $t$ which defines how much it invests in the environmental signal; $k_i = 0$ means no investment and $k_i=1$ means full investment.   The strategy also determines how much it attends to social interaction: a higher $k_i$ implies a lower attention to measurements of neighbors, equivalently, the associated edges of the network graph are scaled by $(1-k_i)^2$.   The network dynamics have two timescales.  In the fast timescale the strategies $k_i$ are fixed, such that the stochastic migration dynamics and the fitnesses depend on a fixed interconnection topology.  In the slow timescale the strategies $k_i$ change according to evolutionary (or adaptive) dynamics and with them the investments and graph edge weights.

We present three main results that leverage the extension of the migration model of \cite{torney} to directed, limited social interaction topologies and the corresponding matrix equation for fitnesses (which replaces extensive Monte-Carlo simulations as used in \cite{torney,guttal}). 
Our first main result is a complete bifurcation analysis of the two-timescale dynamics as a function of investment cost in the case of a large population with an underlying network topology that is all-to-all;  our results explain previous observations that are initial condition dependent and demonstrate the hysteretic effect associated with losing and then recovering migration ability as described in \cite{guttal,guttal2}.   
Our second main result addresses the two-timescale dynamics in the case of a large population with an underlying network topology that is limited;  we find  a relatively small threshold in connectivity above which there is evolutionary branching and emergence of leaders.   

Our third main result addresses the case of a small population in which the slow  evolutionary dynamics of strategies $k_i$, based on replication and mutation, are replaced with individual greedy adaptive dynamics.  We show the critical role that the structure of the underlying network topology plays in determining the location of leaders in the adaptive network and in influencing bifurcations in the dynamics as a function of increasing cost.    This analysis is motivated in part by an interest in leveraging the mechanisms of evolved natural collectives in the design of decentralized protocols for collective motion and decision-making in robotic groups \cite{paleyosc,vincent}.

The paper is outlined as follows\footnote{See \cite{PaisThesis} for related text and figures.}. In Section \ref{s:model} we present our evolutionary migration model. We derive analytical results for fitnesses on the fast timescale in Section \ref{s:fast}.   We study the slow timescale  dynamics in the all-to-all limit in Section \ref{s:all} and for limited interconnections in Section \ref{s:limited}.  We focus on adaptive dynamic nodes in small networks in Section \ref{s:adaptive} and conclude in Section \ref{s:final}.


\section{Model description \label{s:model}}

Our model is derived from the mean-field migration model in \cite{torney} with two key modifications; we explicitly account for a limited social interaction graph topology in the dynamics and we introduce a slightly modified social noise model to allow for analytical fitness computations as a function of graph topology and individual investments.  In the remainder of the paper we will refer to individuals, agents and nodes interchangeably, and likewise population and network.

Consider a set of $N$ agents indexed by $i\in\left\{1,\cdots,N\right\}$.  Let $x_i (t) \in\mathbb{R}$ be the direction of migration of agent $i$ at time $t$, and let $\mu\in\mathbb{R}$ be the ``true'' desirable direction of migration.
Accurate tracking of the  direction $\mu$ over time may correspond to benefits such as improvement in environmental conditions for foraging, predator evasion, early access to breeding grounds, etc. Following \cite{torney}, the stochastic dynamics of each agent are given by
\begin{equation}
dx_i = k_{i} dx_{Di} + (1-k_{i})dx_{Si},
\label{full}
\end{equation}
where $dx_{Di}$ and $dx_{Si}$ are the driven tracking and social consensus stochastic processes, respectively. The adaptive strategy $k_i\in[0,1]$  tunes the level of investment made by agent $i$ in the driven and social processes. When $k_i=1$,  agent $i$ is  fully invested in the tracking process and ignores social cues, while when $k_i=0$  agent $i$ exclusively leverages social cues without tracking the environmental signal. 
 
The driven process $dx_{Di}$ is modeled as an Ornstein-Uhlenbeck stochastic process \cite{ou,gardiner} of the form 
\begin{equation}
dx_{Di} = -k_{Di}(x_i - \mu)dt + \sigma_D dW_{Di}. 
\label{driven}
\end{equation}
The parameter $k_{Di}\geq0$ corresponds to the gain associated with tracking, $\sigma_D^2>0$ is the noise intensity associated with measuring the environmental signal $\mu$, and $dW_{Di}$ represents the standard Wiener increment. For $k_{Di}>0$, the process \eqref{driven} has a steady-state mean and variance given by
\begin{equation}
\lim_{t\rightarrow \infty} E\left[x_i\right]=\mu,\;\;\lim_{t\rightarrow \infty} E\left[(x_i-\mu)^2\right]=\frac{\sigma_D^2}{2k_{Di}}. 
\label{var}
\end{equation}
Higher values of tracking gain $k_{Di}$ result in lower steady-state variance in migration direction $x_i$, which corresponds to improved tracking.

The social consensus process $dx_{Si}$ is modeled using basic tools from graph theory \cite{con1,con2,con3}. Individuals are modeled as nodes on a directed social interconnection graph with underlying structure defined by adjacency matrix $A=[a_{ij}]\in\mathbb{R}^{N\times N}$. A directed edge in the graph from individual $i$ to individual $j$ is read as ``$i$ can sense $j$''. Let $\mathcal{N}_i$ denote the set of neighbors of individual $i$ (i.e., the set of agents that individual $i$ can sense), and let $\|\mathcal{N}_i\|$ denote the cardinality of this set (number of neighbors that individual $i$ can sense).   The adjacency matrix $A$ is given by
\begin{equation}
a_{ij}=\begin{cases}  \|\mathcal{N}_i\|^+ &\text{ if }j\in \mathcal{N}_i \\   0 &\text{ otherwise},  \end{cases}
\label{adj}
\end{equation}
where $ \|\mathcal{N}_i\|^+$ is the pseudoinverse of $\|\mathcal{N}_i\|$ ($\|\mathcal{N}_i\|^+=0$ when $\|\mathcal{N}_i\|= 0$, $\|\mathcal{N}_i\|^+=1/\|\mathcal{N}_i\|$ otherwise).
The Laplacian matrix of the graph corresponding to $A$ is given by $L=\text{diag}(A\bm{1})-A$, where $\bm{1}$ is a vector of ones of appropriate dimension.  

The social consensus process  $dx_{Si}$ depends on the underlying social interaction graph Laplacian $L$, the gain associated with the social process $k_{Si}\geq 0$, and the noise associated with measuring the social signal $\sigma_{Si}>0$ as follows:
\begin{equation}
\begin{aligned}
dx_{Si}&=-k_{Si}L_i \bm{x} dt + \sigma_{Si} dW_{Si},
\end{aligned}
\label{social}
\end{equation}
where $dW_{Si}$ is the standard Wiener increment, $L_i$ denotes the $i^{\text{th}}$ row of $L$ and $\bm{x}$ is the vector of the $x_i$.   In the social consensus process the graph that represents the social interactions is the underlying graph modified such that all edges from agent $i$ are weighted by gain $k_{Si}$, $i=1, \ldots, N$.  Then $k_{Si}L_i\bm{x}=k_{Si}\|\mathcal{N}_i\|^+\sum_{j\in\mathcal{N}_i}(x_i-x_j)$.   

Following the setup in \cite{torney}, we make a simplification to reduce the parameter space to one dimension by assuming that the gains are proportional to the relative investments in each process, i.e., $k_{Di} =k_i$ and $k_{Si} = 1-k_i$. Substituting into \eqref{driven} and \eqref{social} and then into \eqref{full} we have
\begin{equation}
\begin{aligned}
dx_i&=k_idx_{Di}+(1-k_i)dx_{Si}\\
&=-k_i^2(x_i-\mu)dt-(1-k_i)^2L_i\bm{x}dt+\sqrt{k_i^2\sigma_D^2+(1-k_i)^2\sigma_{Si}^2}\;dW_i,
\end{aligned}
\label{sys1}
\end{equation} 
where $dW_i$ is the standard Wiener increment. Define the coordinate transformation  $\tilde{x}_i$ as 
\begin{equation}
\tilde{x}_i =\frac{x_i-\mu}{\sigma_D}, \text{ and correspondingly }\bm{\tilde{x}}=\frac{\bm{x}-\mu\bm{1}}{\sigma_D}.
\label{trans}
\end{equation}
Substituting \eqref{trans} in \eqref{sys1} and using $L_i \bm{1}=0$ we have the normalized dynamics
\begin{equation}
d\tilde{x}_i=-k_i^2 \tilde{x}_i dt-(1-k_i)^2L_i\bm{\tilde{x}}dt+\sqrt{k_i^2 + (1-k_i)^2\frac{\sigma_{Si}^2}{\sigma_D^2}}\;dW_i.
\label{sys}
\end{equation} 

The social noise term $\sigma_{Si}$ reflects the difficulty that agents have in extracting social cues from interactions with neighbors. In \cite{torney}, it is assumed that this difficulty (magnitude of $\sigma_{Si}$) decreases as the ordering or coherence of the population increases. Here we take a slightly different local (and graph dependent) approach and relate the social noise term for an  agent to the average investment of the neighbors of that agent as
\begin{equation}
\frac{\sigma_{Si}^2}{\sigma_D^2} = \beta^2 (1 -k_{nbhd,i}),
\label{beta}
\end{equation}
where
\[
k_{nbhd,i} = \|\mathcal{N}_i\|^+ \sum_{j\in \mathcal{N}_i} k_i 
\]
and $\beta^2$ is a social noise scaling parameter. In vector form, $\bm{k}_{nbhd} =A\bm{k}$, $\bm{k} = (k_1, \ldots, k_N)^T$. 
In this model agents that interact socially with neighbors having a high level of investment, have a correspondingly lower social noise term, and are hence better able to extract social cues from their neighbors.   

The stochastic system \eqref{sys} can be written compactly in matrix form as 
\begin{equation}
d\tilde{\bm{x}} = -(K_1+K_2L)\tilde{\bm{x}} \;dt + S d\bm{W},
\label{sysm}
\end{equation}
where the diagonal matrices $K_1$, $K_2$ and $S$ are given by  $K_1 = \text{diag}(k_i^2)$, $K_2 = \text{diag}\left((1-k_i)^2\right)$ and $S = \text{diag}\left( \sqrt{k_i^2 + \beta^2(1-k_i)^2(1-k_{nbhd,i}) }\right)$. 

As discussed in \cite{guttal, torney}, the long-term migratory performance of  agent $i$ with dynamics \eqref{sysm} can be computed as
$\exp \left( \frac{-\sigma_{ss,i}^2}{2}\right)$, where $\displaystyle{\sigma_{ss,i}^2=\lim_{t\rightarrow \infty} E\left[(x_i-\mu)^2\right]}$ is the steady-state variance of $x_i$.   For an agent $i$ that moves at a constant mean speed in the direction $x_i$,  this performance measure corresponds to the expected migration speed of the agent in the desired direction $\mu$. The fitness or utility of  agent $i$  
is defined in \cite{guttal,torney} as
\begin{equation}
F_i=\exp \left( \frac{-\sigma_{ss,i}^2}{2}\right)\exp\left(-ck_{i}^2\right),
\label{fitsol}
\end{equation}
where the second exponential models the cost associated with investment in tracking, and $c>0$ is  a scaling cost parameter.  The  form of cost function is chosen for analytical tractability. Simulations in \cite{torney,guttal} show that reasonable variations of the fitness function \eqref{fitsol} yield qualitatively comparable results. 

The saturating form of the performance function $ \exp \left( \frac{-\sigma_{ss,i}^2}{2}\right)$ as a function of investment can be interpreted as modeling the diminishing returns of increasing investment. Further, the quadratic form of the cost $ck_{i}^2$ implies that higher investments in the driven process are increasingly costly. The optimal strategy for solitary migrating individuals can be found by  maximizing \eqref{fitsol} with respect to $k_{Di}$ after substituting for $\sigma^2_{ss,i}=\frac{\sigma_D^2}{2 k_{Di}}$ from \eqref{var}. This results in 
\begin{equation}
k_{D,opt}=\sqrt[3]{\frac{\sigma_D^2}{8c}}.
\label{optsol}
\end{equation}
While disconnected individuals may adopt the optimal strategy \eqref{optsol}, the presence of social interactions between individuals leads to very different strategies.


\section{Fast timescale analysis \label{s:fast}}

In this section we study the stochastic migration dynamics \eqref{sysm} and derive analytical tools to compute migratory performance as a function of the underlying network graph Laplacian $L$ and agent investments $k_i$. These tools are used to compute fast timescale fitness (utility) in slow evolutionary (adaptive) dynamics in the sections that follow.  This includes Section \ref{s:all}, where the population is large and the underlying network topology is all-to-all, but most particularly  Section \ref{s:limited}, where the population is large and the underlying network topology is limited, and  Section \ref{s:adaptive} where the population is small.    

   It is straightforward to show that if $k_i>0$ for all $i$ then the population can migrate for any network topology; likewise, if the population can migrate then it is necessary that $k_i>0$ for at least one agent $i$ \cite{PaisThesis}.  If the network has a spanning tree and $k_i>0$  for the root node of the spanning tree then the population can migrate \cite{Moore2005}.  In Theorem \ref{th:necsuf} we prove conditions on  $L$ and $\bm{k}$ that are necessary and sufficient for asymptotic stability of the zero equilibrium of the dynamics \eqref{sysm} without noise:  
 \begin{equation}
 \dot{\tilde{\bm{x}}} = M\tilde{\bm{x}}, \text{ where }M=-(K_1+K_2 L) .
\label{nf}
 \end{equation}
The asymptotic stability of  \eqref{nf} corresponds to the population developing  the ability to collectively migrate since $\bm{\tilde x}\rightarrow \bm{0}\implies \bm{x} \rightarrow \mu \bm{1}$.  In Theorem~\ref{th:lyap}  we prove that asymptotic stability of \eqref{nf}  is necessary and sufficient  for there to exist a steady-state probability distribution of $\tilde{\bm{x}}$ for the stochastic dynamics \eqref{sysm}, and we show that the steady-state covariance matrix, and thus the individual fitnesses, can be computed from a Lyapunov equation that depends on $L$ and $\bm{k}$. 
The  dynamics  \eqref{nf} are asymptotically stable if and only if  $M$ is Hurwitz, i.e., all eigenvalues of $M$ have strictly negative real part.  

 The proof of Theorem \ref{th:necsuf} requires the following lemma from \cite{ren} (see also \cite{con1,fran}).  We first make the following definition.
\begin{defn}
\label{defspan}
A directed graph has a {\em directed spanning tree} 
if there exists at least one node $k$ on the graph such that a directed path exists from every other node on the graph to node $k$. Node $k$ is known as a root node of the graph. 
\end{defn}
\begin{lem}
\label{lm:ren}
For a general Laplacian matrix $\tilde{L}=[\tilde{l}_{ij}]\in\mathbb{R}^{N\times N}$ given by $\tilde{l}_{ij}\leq 0 $ for $i\neq j$ and $\displaystyle{\sum_{j=1}^N \tilde l_{ij}=0}$ for all $i\in\{1,\cdots,N\}$, the following conditions are equivalent:
\begin{enumerate}
\renewcommand{\labelenumi}{(\roman{enumi})}
\item $\tilde{L}$ has a simple zero eigenvalue and all of the other eigenvalues have positive real parts. 
\item The directed graph $\mathcal{G}(\tilde L)$ has a directed spanning tree, where $\mathcal{G}(\tilde L)$ is the graph with adjacency matrix $\tilde{A}=[\tilde{a}_{ij}]$, $\tilde{a}_{ii}=0$ for all $i$, and $\tilde{a}_{ij}=-\tilde{l}_{ij}$ for all $i\neq j$. 
\item For $\bm{z}\in\mathbb{R}^N$, the dynamics $\dot{\bm{z}}=-\tilde L\bm{z}$ converge asymptotically to $\alpha \bm{1}$ for some scalar $\alpha$. 
\end{enumerate}  
\end{lem}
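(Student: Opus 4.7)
The plan is to prove the three conditions equivalent in two halves: (i) $\Leftrightarrow$ (iii) follows from elementary linear-systems theory, while (i) $\Leftrightarrow$ (ii) is the graph-theoretic heart of the result.

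I would first record two preliminaries. The row-sum hypothesis forces $\tilde L \bm{1} = \bm{0}$, so $0$ is always an eigenvalue of $\tilde L$ with right eigenvector $\bm{1}$. Since $\tilde l_{ii} = -\sum_{j \neq i} \tilde l_{ij} \geq 0$, Gershgorin's circle theorem places every eigenvalue inside $\bigcup_i \{ z \in \mathbb{C} : |z - \tilde l_{ii}| \leq \tilde l_{ii} \}$, a union of closed disks lying in the closed right half-plane that meet the imaginary axis only at the origin. Hence every eigenvalue has nonnegative real part, and condition (i) reduces to the statement that $0$ is algebraically simple.

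For (i) $\Leftrightarrow$ (iii), the forward implication follows by writing $-\tilde L$ in Jordan form: under (i) the single $0$-block projects onto $\mathrm{span}(\bm{1})$ and the remaining blocks are strictly Hurwitz, so $e^{-\tilde L t} \to \bm{1}\bm{v}^T$ where $\bm{v}$ is a left null vector of $\tilde L$ normalized so $\bm{v}^T \bm{1} = 1$, and $\bm{z}(t) \to (\bm{v}^T \bm{z}(0))\bm{1}$. For the converse, convergence to $\alpha\bm{1}$ for every initial condition rules out additional null directions, other eigenvalues on the imaginary axis (which by Gershgorin must equal $0$), and Jordan chains at $0$ of length $\geq 2$ (which would produce $t$-linear growth), giving (i).

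The substantive direction is (i) $\Leftrightarrow$ (ii). I would work with the condensation of $\mathcal{G}(\tilde L)$ into strongly connected components (SCCs), using the observation that (ii) is equivalent to the condensation having a unique \emph{sink} SCC (one with no outgoing condensation edges), and that the root $k$ must lie in that sink. Ordering vertices so that transient SCCs come first and sink vertices last, $\tilde L$ takes block upper triangular form with diagonal blocks $A$ (transient) and $C$ (sink) and lower-left block equal to zero, since sink nodes have no outgoing edges. For (ii) $\Rightarrow$ (i), $C$ is itself the Laplacian of a strongly connected graph and has a simple zero eigenvalue by Perron--Frobenius, while $A$ is a weakly diagonally dominant M-matrix that is nonsingular because every transient vertex has a directed path to the sink and strict diagonal dominance propagates along such paths; hence $0$ is algebraically simple for $\tilde L$. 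For (i) $\Rightarrow$ (ii), the contrapositive: if the condensation has two distinct sinks $S_1, S_2$, the analogous reordering gives two separate sink Laplacian blocks, each contributing an independent zero eigenvector supported on its sink, producing $\dim \ker \tilde L \geq 2$.

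The main obstacle is the careful handling of the condensation block structure — specifically, verifying that the transient block $A$ is nonsingular under (ii) (which relies on propagating strict diagonal dominance along directed paths to the sink) and, in the converse, that multiple sink blocks produce linearly independent null vectors of the full $\tilde L$ rather than merely of their diagonal sub-blocks. An alternative route, if the directed matrix-tree theorem is available, reads off the algebraic multiplicity of $0$ as the number of sink SCCs via positive principal minors, bypassing most of this bookkeeping.
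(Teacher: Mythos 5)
Your argument is correct, but note that the paper does not actually prove this lemma: its ``proof'' consists entirely of pointers to Lemma 3.1 of \cite{ren}, Theorem 2 of \cite{con1}, and Lemma 2 of \cite{fran}, so you have supplied a self-contained derivation where the authors deferred wholly to the consensus literature. Your route --- Gershgorin to confine the spectrum of $\tilde L$ to the closed right half-plane meeting the imaginary axis only at the origin, a Jordan-form argument for (i) $\Leftrightarrow$ (iii), and the condensation into strongly connected components with $\tilde L$ put in block upper-triangular form with transient block $A$ and sink block $C$ for (i) $\Leftrightarrow$ (ii) --- is a standard and fully workable alternative to the spanning-tree inductions and nonnegative/stochastic-matrix arguments used in those references, and it has the advantage of identifying the algebraic multiplicity of the zero eigenvalue with the number of sink components, which is sharper than the lemma itself. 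Two remarks on the points you flag as obstacles. First, the nonsingularity of $A$ is exactly the statement that a weakly chained diagonally dominant matrix is invertible (every transient row is weakly dominant, and each is joined by a directed path of nonzero off-diagonal entries to a strictly dominant row, namely one with an edge into the sink); this should be cited or proved as a named lemma rather than asserted as ``propagation,'' but it is true. Second, the converse direction is easier than you fear: since condition (i) concerns algebraic multiplicity, the block-triangular factorization $\det(\lambda I - \tilde L)=\det(\lambda I - A)\prod_m \det(\lambda I - C_m)$ over the sink blocks $C_m$, each of which annihilates $\bm{1}$, immediately yields multiplicity at least two when there are two or more sinks, with no need to exhibit independent null vectors of the full matrix. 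The directed matrix-tree shortcut you mention at the end is the approach closest in spirit to the sources the paper cites.
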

\begin{proof}
See Lemma 3.1 in \cite{ren}, Theorem 2 in \cite{con1}, and Lemma 2 in \cite{fran}. 
\end{proof}

For undirected connected graphs, every node is a root node. For general (connected or disconnected) directed graphs, one can define a {\it root set} that is accessible from every other node in the network, i.e., there is a directed path from every node to at least one node in the root set. Let $\mathcal{R}(\tilde L)$ denote a minimal root set (set with smallest cardinality) of the graph $\mathcal{G}(\tilde L)$ with Laplacian $\tilde L$. The set $\mathcal{R}(\tilde L)$ is not necessarily unique as is illustrated in Figure~\ref{f:root}. For example, for an undirected connected graph, $\mathcal{R}(\tilde L)=\{i\}$ for any node $i$. 

\begin{figure}[ht!]
  \begin{center}
    \includegraphics[width=1\linewidth]{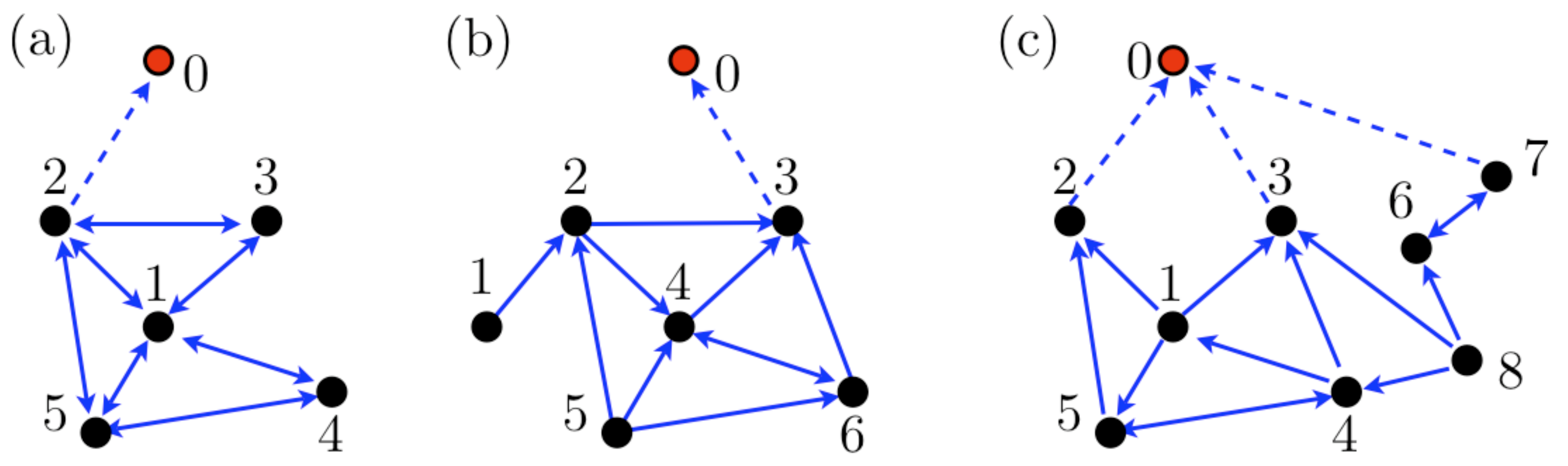}
    \caption{Illustrations of the root set $\mathcal{R}(L)$ and the conditions of Theorem \ref{th:necsuf}. In each graph, the set of nodes labeled $1,\cdots,N$ and solid arrows correspond to the underlying social  graph $\mathcal{G}(L)$. The complete set of nodes labeled $0,1,\cdots,N$ and all the arrows correspond to the augmented graph $\mathcal G(\hat{L})$, where node $0$ represents the external signal. All three augmented graphs shown have a spanning tree rooted at node $0$, and hence satisfy the conditions of Theorem \ref{th:necsuf}. (a) $\mathcal{R}(L)=\left\{i\right\},$ any $i\in \{1,2,3,4, 5\}$. (b) $\mathcal{R}(L)=\left\{3 \right\}$. (c)  $\mathcal{R}(L)=\left\{2,3,6\right\}\text{ or }\left\{2,3,7 \right\}$.  } 
    \label{f:root}
  \end{center}
\end{figure}

\begin{thm}
\label{th:necsuf}
Matrix $M$ from \eqref{nf} is Hurwitz if and only if there exists a minimal root set $\mathcal{R}(L)$ such that $k_j>0$ for all nodes $j\in\mathcal{R}(L)$, where $L$ is the Laplacian matrix of the underlying social graph with adjacency matrix \eqref{adj}.
\end{thm}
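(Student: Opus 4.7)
The plan is to augment the system with a virtual node $0$ that represents the reference signal $\mu$ (in transformed coordinates, $\tilde{x}_0 = 0$), and then reduce the question about $M$ to a question about a standard Laplacian matrix on the augmented graph, so that Lemma \ref{lm:ren} applies directly. Concretely, I would define $\hat{L}\in\mathbb{R}^{(N+1)\times(N+1)}$ with its $0$th row equal to zero and its $i$th row ($i\ge 1$) defined by $\hat{l}_{i0}=-k_i^2$, $\hat{l}_{ij}=(K_2L)_{ij}$ for $j\in\{1,\dots,N\}\setminus\{i\}$, and $\hat{l}_{ii}=k_i^2+(1-k_i)^2\mathbb{1}[\mathcal{N}_i\neq\emptyset]$. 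A quick check shows each row sums to zero and each off-diagonal entry is nonpositive, so $\hat{L}$ is a Laplacian in the sense of Lemma \ref{lm:ren}. In the associated graph $\mathcal{G}(\hat{L})$, node $0$ has no outgoing edges, and node $i\ge 1$ has an outgoing edge to $0$ exactly when $k_i>0$, together with the same outgoing edges to its social neighbors as in $\mathcal{G}(L)$.

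Next I would exploit the block structure of $\hat{L}$. Since row $0$ of $\hat{L}$ is zero, in the ordering $(0,1,\dots,N)$ the matrix is block lower triangular with diagonal blocks $0\in\mathbb{R}^{1\times 1}$ and $-M\in\mathbb{R}^{N\times N}$. Therefore $\mathrm{spec}(\hat{L})=\{0\}\cup\mathrm{spec}(-M)$. Hence $M$ is Hurwitz if and only if $\hat{L}$ has a simple zero eigenvalue with all remaining eigenvalues having strictly positive real parts, which by Lemma \ref{lm:ren}(i)--(ii) is equivalent to $\mathcal{G}(\hat{L})$ possessing a directed spanning tree. Because node $0$ is the unique node in $\mathcal{G}(\hat{L})$ with no outgoing edges, any spanning tree must be rooted at $0$, so the condition becomes: from every node $i\in\{1,\dots,N\}$ there is a directed path in $\mathcal{G}(\hat{L})$ to node $0$.

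The final step is purely combinatorial: I would show that a directed path from every $i\in\{1,\dots,N\}$ to $0$ in $\mathcal{G}(\hat{L})$ exists if and only if there is a minimal root set $\mathcal{R}(L)$ of $\mathcal{G}(L)$ with $k_j>0$ for all $j\in\mathcal{R}(L)$. For the forward direction, let $S=\{j:k_j>0\}$; every node in $\mathcal{G}(L)$ reaches $S$ by removing the final edge $j\to 0$ of its path to $0$, so $S$ is a root set in $\mathcal{G}(L)$ and therefore contains some minimal root set $\mathcal{R}(L)\subseteq S$. For the converse, given $\mathcal{R}(L)\subseteq S$, concatenate the $\mathcal{G}(L)$-path from $i$ to some $j\in\mathcal{R}(L)$ with the edge $j\to 0$. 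Equivalently, both conditions amount to each sink strongly connected component of $\mathcal{G}(L)$ containing at least one node with $k_j>0$.

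The main obstacle I anticipate is not the spectral reduction (which is routine once the augmentation is set up) but rather verifying cleanly that $\hat{L}$ genuinely satisfies the hypotheses of Lemma \ref{lm:ren}, especially handling the boundary case of nodes with $\mathcal{N}_i=\emptyset$ so that the row-sum-zero property still holds, and then carrying out the final combinatorial step in a way that explicitly produces a \emph{minimal} root set rather than merely some root set. This amounts to invoking the standard characterization of minimal root sets as transversals of the sink strongly connected components of $\mathcal{G}(L)$, which guarantees that once $\{j:k_j>0\}$ is a root set it must contain one such transversal.
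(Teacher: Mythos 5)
Your proposal is correct and follows the same overall architecture as the paper's proof: augment the network with a virtual node $0$ carrying the invariant state $\tilde{x}_0=0$, verify that the resulting $\hat{L}$ is a generalized Laplacian, invoke Lemma \ref{lm:ren}, and then translate the spanning-tree condition on $\mathcal{G}(\hat L)$ into the root-set condition on $\mathcal{G}(L)$. You diverge in two localized places, both defensibly. First, to pass from ``$\mathcal{G}(\hat L)$ has a spanning tree'' to ``$M$ is Hurwitz'' you use the block lower-triangular structure of $\hat L$ to read off $\mathrm{spec}(\hat L)=\{0\}\cup\mathrm{spec}(-M)$ and then apply equivalence (i)--(ii) of the lemma, whereas the paper uses equivalence (ii)--(iii) together with the invariance of $\tilde{x}_0=0$ to force $\alpha=0$; your spectral route is a clean shortcut that avoids the dynamical detour. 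Second, your combinatorial step identifies minimal root sets as transversals of the sink strongly connected components of $\mathcal{G}(L)$, which gives a direct construction of the required minimal root set inside $\{j:k_j>0\}$; the paper instead argues by contradiction via an informal node-replacement argument, and your version is the more rigorous of the two. The one detail to tighten in a written version is the converse direction of the combinatorial step: an intermediate node $m$ with $k_m=1$ loses its social out-edges in $\mathcal{G}(\hat L)$ (they are weighted by $(1-k_m)^2$), so rather than concatenating the full $\mathcal{G}(L)$-path from $i$ to $\mathcal{R}(L)$ with a final edge to $0$, you should truncate that path at the \emph{first} node with $k_m>0$ and route it to node $0$ from there; this is a one-line fix and the paper's own proof is no more explicit about it.
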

\begin{proof}
Define the normalized external signal state  $\tilde x_0=0$ (equivalent to $x_0=\mu$) and the augmented state vector $\bm{z}=\left[\tilde x_0\;\;\tilde{\bm x} \right]^T$. Consider the dynamics
\begin{equation}
\dot{\bm z} = \left[\begin{array}{c}\dot{\tilde {x}}_0 \\ \dot {\tilde{\bm x}}\end{array}\right]=-\left[\begin{array}{c|c}0 & 0_{N\times 1} \\\hline -K_1 \bm{1} & -M\end{array}\right]\left[\begin{array}{c}{\tilde {x}}_0 \\  {\tilde{\bm x}}\end{array}\right]=-\hat{L} \bm{z}.
\label{augd}
\end{equation}
Then $\hat L$ satisfies the properties of general Laplacian matrices given in Lemma \ref{lm:ren}.  $\mathcal G(\hat{L})$ is the same as the graph $\mathcal G(K_2 L)$ with the addition of the node  $0$ (with state ${\tilde x}_0$) having incoming links with weights $k_j^2$ from all nodes $j=1,\cdots,N$ (see Figure \ref{f:root} for an illustration where links to node $0$ with $k_j > 0$ are shown as dashed arrows). 

Since node $0$ of $\mathcal G(\hat L)$ has no outgoing links, following Definition \ref{defspan} we have the following condition,
\begin{equation}
\mathcal{G}(\hat L) \text{ has a spanning tree } \iff \mathcal R(\hat{L})=\left\{ 0 \right\}.
\label{root0}
\end{equation}
We claim the following  
\begin{equation}
\mathcal{G}(\hat L) \text{ has a spanning tree } \implies \exists\; \mathcal{R}(L) \text{ s.t. } k_j>0\text{ for all }j\in\mathcal{R}(L).
\label{if}
\end{equation}
We prove the statement above by contradiction. Assume that $\mathcal{G}(\hat L)$ has a spanning tree and for each root set $\mathcal{R}(L)$, there exists a node $j$ such that $k_j=0$. Since $\mathcal{G}(\hat L)$ has a spanning tree, $\mathcal{R}(\hat L)=\{0\}$, which means that there is a directed path from every node to node $0$. Now consider any root set $\mathcal{R}(L)$. Since $k_j=0$, node $j$ in $\mathcal{R}(L)$ can only reach node $0$ by a path to a node $m\notin \mathcal{R}(L)$, for which $k_m>0$. However, if such a path exists, then the set $\mathcal{R}(L)$ where node $j$ is replaced with node $m$ is another root set. By assumption we must have $k_m=0$. Thus there exists no directed path from node $j$ to node $0$. Hence $\mathcal{G}(\hat{L})$ does not have a spanning tree and we have proved the claim.

Consider any root set $\mathcal R (L)$ and assume that $k_j>0$ for all $j\in\mathcal R (L)$. Then all nodes $j\in\mathcal R (L)$ are connected to node $0$ of $\mathcal G(\hat L)$. For all nodes $m \notin \mathcal R(L)$, either $k_m>0$ and $m$ has a direct link to node $0$, or $k_m=0$ in which case $m$ has a link to at least one other node on a directed path to the root node $0$, via an element of $\mathcal R (L)$. Hence
\begin{equation}
\exists \; \mathcal R(L) \text{ s.t. }k_j>0\text{ for all }j\in\mathcal{R}(L)\implies \mathcal{G}(\hat L) \text{ has a spanning tree }  .
\label{onlyif}
\end{equation} 
Combining \eqref{root0}, \eqref{if} and \eqref{onlyif}, we have that 
\begin{align}
\exists \; \mathcal R(L) \text{ s.t. }k_j>0\text{ for all } j\in\mathcal{R}(L) &\iff \mathcal{G}(\hat L) \text{ has a spanning tree } \nonumber \\ 
& \iff \mathcal{R}(\hat{L})=\left\{0\right\} .
\label{co1}
\end{align} 

From Lemma \ref{lm:ren}, $\mathcal{G}(\hat L)$ has a spanning tree if and only if the dynamics \eqref{augd} converge asymptotically to $\alpha \bm{1}$ for some scalar $\alpha$. However in \eqref{augd}, the state $\tilde{x}_0=0$ is invariant and hence $\alpha=0$. 
Thus,
\begin{equation}
\mathcal{G}(\hat L) \text{ has a spanning tree } \iff M \text{ is Hurwitz}.
\label{co2}
\end{equation}
Combining \eqref{co1} and \eqref{co2} we have the desired result. 
\end{proof}
Each of the graphs illustrated in Figure~\ref{f:root} satisfies the conditions of Theorem~\ref{th:necsuf} since $k_j>0$ for all $j$ in at least one root set of $\mathcal R(L)$ (dashed arrows).   

We now return to the noisy migration model given by the system of stochastic equations \eqref{sysm}. In order to compute the fitness of an individual in a migratory collective as defined by \eqref{fitsol}, a computation of the steady-state variance of the individual's dynamics $\sigma_{ss,i}^2$ is necessary. This quantity in turn depends on the level of investment of each of the individuals in the network (represented by the vector $\bm{k}$) and the  topology of the  underlying social interconnection graph $\mathcal G(L)$. The variances are the diagonal elements of the steady-state covariance matrix $\Sigma$ for the dynamics \eqref{sysm}, which can be computed by solving the matrix Lyapunov equation defined in Theorem \ref{th:lyap} (see also \cite{george1,poulakakisTAC} for related results on the consensus and drift-diffusion models respectively). 

\begin{thm}
\label{th:lyap}
For the stochastic dynamics \eqref{sysm} with social graph $\mathcal{G}(L)$, suppose there exists an $\mathcal{R}(L)$ such that $k_j>0$ for all $j\in\mathcal{R}(L)$. Then the system of stochastic differential equations \eqref{sysm} has steady-state mean $\displaystyle{\lim_{t\rightarrow \infty} E\left[\bm{\tilde x}(t)\right] =\bm{0}}$ and steady-state covariance matrix $\displaystyle{\Sigma=\lim_{t\rightarrow \infty} E\left[\bm{\tilde x}(t)^T \bm{\tilde x}(t)\right]}$ given by the solution to the Lyapunov equation:
\begin{equation}
 \displaystyle{(K_1 + K_2 L)\Sigma + \Sigma (K_1 + K_2 L)^T = SS^T}.
 \label{leq}
 \end{equation}
\end{thm}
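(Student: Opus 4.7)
The plan is to reduce the statement to a standard result for linear stochastic differential equations driven by additive noise, using Theorem~\ref{th:necsuf} to supply the necessary stability of the drift.

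First, I would set $M = -(K_1 + K_2 L)$ and rewrite \eqref{sysm} as the linear SDE $d\tilde{\bm{x}} = M\tilde{\bm{x}}\, dt + S\, d\bm{W}$. By hypothesis there exists a minimal root set $\mathcal{R}(L)$ such that $k_j > 0$ for all $j \in \mathcal{R}(L)$, so Theorem~\ref{th:necsuf} guarantees that $M$ is Hurwitz. Since the diffusion matrix $S$ is constant and all coefficients are linear in $\tilde{\bm{x}}$, the process is an $N$-dimensional Ornstein--Uhlenbeck process with explicit solution
\begin{equation*}
\tilde{\bm{x}}(t) = e^{Mt}\tilde{\bm{x}}(0) + \int_0^t e^{M(t-s)} S\, d\bm{W}(s).
\end{equation*}

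Next I would compute the first two moments directly from this representation. Taking expectations and using the Itô isometry gives $E[\tilde{\bm{x}}(t)] = e^{Mt}E[\tilde{\bm{x}}(0)]$, which tends to $\bm{0}$ as $t \to \infty$ because $M$ is Hurwitz, establishing the claim on the steady-state mean. For the covariance, an application of Itô's formula to $\tilde{\bm{x}}\tilde{\bm{x}}^T$ (or direct expansion of the stochastic convolution above) yields the matrix ODE
\begin{equation*}
\dot{P}(t) = MP(t) + P(t)M^T + SS^T,
\end{equation*}
where $P(t) = E[\tilde{\bm{x}}(t)\tilde{\bm{x}}(t)^T] - E[\tilde{\bm{x}}(t)]E[\tilde{\bm{x}}(t)]^T$. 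Because $M$ is Hurwitz, $P(t)$ converges as $t\to\infty$ to the unique solution $\Sigma$ of the algebraic Lyapunov equation $M\Sigma + \Sigma M^T + SS^T = 0$, which rearranges to the desired form $(K_1 + K_2 L)\Sigma + \Sigma (K_1 + K_2 L)^T = SS^T$. Since the mean vanishes in the limit, $P(t)$ and $E[\tilde{\bm{x}}(t)\tilde{\bm{x}}(t)^T]$ share the same limit $\Sigma$.

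The only subtlety I foresee is the bookkeeping that links Theorem~\ref{th:necsuf} to what is really needed here: the Hurwitz property of $M$, plus the observation that the stationary covariance is independent of the initial distribution (so the "steady-state" quantities are well-defined). Existence and uniqueness of the positive semidefinite solution of the Lyapunov equation then follow from the standard result for Hurwitz $M$, namely $\Sigma = \int_0^\infty e^{Ms} SS^T e^{M^T s}\, ds$, which is the explicit form of the limit of $P(t)$ and confirms that the Lyapunov equation determines $\Sigma$ uniquely. No part of the argument requires a new calculation; the whole proof is essentially an assembly of Theorem~\ref{th:necsuf} with standard linear SDE theory.
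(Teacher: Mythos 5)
Your proposal is correct and follows essentially the same route as the paper's proof: both invoke Theorem~\ref{th:necsuf} to conclude that $M=-(K_1+K_2L)$ is Hurwitz and then apply the standard theory of the multivariate Ornstein--Uhlenbeck process to obtain the vanishing steady-state mean and the Lyapunov equation for the stationary covariance. Your version adds slightly more detail (the explicit stochastic convolution and the matrix ODE for $P(t)$) where the paper simply cites the textbook result, but the substance is identical.
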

\begin{proof}
The system \eqref{sysm} is a multivariate Ornstein-Uhlenbeck process with mean given by \cite{gardiner}$$E\left[ \bm{\tilde x}(t)\right]=\exp\left(Mt\right)E\left[\bm{\tilde x}(0)\right].$$ By Theorem \ref{th:necsuf}, $M$ is Hurwitz and hence $\displaystyle{\lim_{t\rightarrow \infty} E\left[\bm{\tilde x}(t)\right] =\bm{0}}$.

The covariance matrix of $\bm{\tilde x}$ is given by 
\begin{align*}
&E\left[(\bm{\tilde x}(t)-E\left[\bm{\tilde x}(t)\right])(\bm{\tilde x}(t)-E\left[\bm{\tilde x}(t)\right])^T\right] \\
&=\exp(Mt) E\left[\bm{\tilde x}(0) \bm{\tilde x} (0)^T\right] \exp(Mt) +\int\limits_0^t \exp(M(t-s))SS^T\exp(M^T(t-s))ds.
\end{align*}
Since $M$ is Hurwitz, the steady-state covariance matrix is given by $$\Sigma=\lim_{t\rightarrow \infty} E\left[\bm{\tilde x}(t) \bm{\tilde x}(t)^T\right]=\lim_{t\rightarrow \infty}\int\limits_0^t \exp(M(t-s))SS^T\exp(M^T(t-s))ds,$$ which as is shown in \cite{gardiner} is the solution to the Lyapunov equation:
\[
M\Sigma+\Sigma M^T + SS^T =0.
\]
\end{proof}

We leverage the results of Theorem \ref{th:lyap} to study the role of network topology in the evolutionary dynamics of collective migration in Sections  \ref{s:all}, \ref{s:limited}, and \ref{s:adaptive}.   We note that the theorem proves useful even in Section \ref{s:all}, where the underlying network topology is all-to-all, because in the general case the different individual investment strategies provide different edge weights in the social graph, making for a directed network topology.    


\section{All-to-all case \label{s:all}}
\subsection{Model specialized to the all-to-all case}

As a first step in analyzing the evolutionary dynamics of the social migration model \eqref{sysm} with fitness \eqref{fitsol}, we consider the limit of all-to-all interconnection ($a_{ij}=\frac{1}{N-1}$ for all $i\neq j$ in \eqref{adj}) in a large population (labeled the {\em resident}  population with subscript $R$), with all individuals having a common level of investment $k_{R}>0$. This limit corresponds to the mean-field assumption used in \cite{torney}. By the law of large numbers, the average direction of population migration in the limit of large $N$ is the same as the desired migration direction $\mu$ after the decay of transients (i.e., in steady-state, $\displaystyle{\lim_{N\rightarrow\infty} L\bm{x} = \bm{x}-\mu \bm{1}}$). Substituting $L\bm{\tilde x}=\bm{\tilde x}$ in \eqref{sysm}, the dynamics of an individual in the population are given by
\begin{equation}
d\tilde x_{R} = -\left[ k_{R}^2 + (1-k_{R})^2\right]\tilde x_{R} \;dt + \sqrt{k_{R}^2 + \beta^2(1-k_{R})^3}\; dW.
\label{res}
\end{equation}

The corresponding steady-state variance of an individual's direction is given by 
\begin{equation}
\sigma_{ss,R}^2=\frac{k_{R}^2 + \beta^2(1-k_{R})^3}{2\left[k_{R}^2 + (1-k_{R})^2\right]},
\label{varss}
\end{equation}
with steady-state migration speed (performance) given by $\exp(-\sigma_{ss,R}^2/2)$. In Figure \ref{f:varybeta} we plot this steady-state migration speed as a function of investment $k_{R}$ for varying social noise term $\beta$. 
\begin{figure}[ht!]
  \begin{center}
    \includegraphics[width=.6\linewidth]{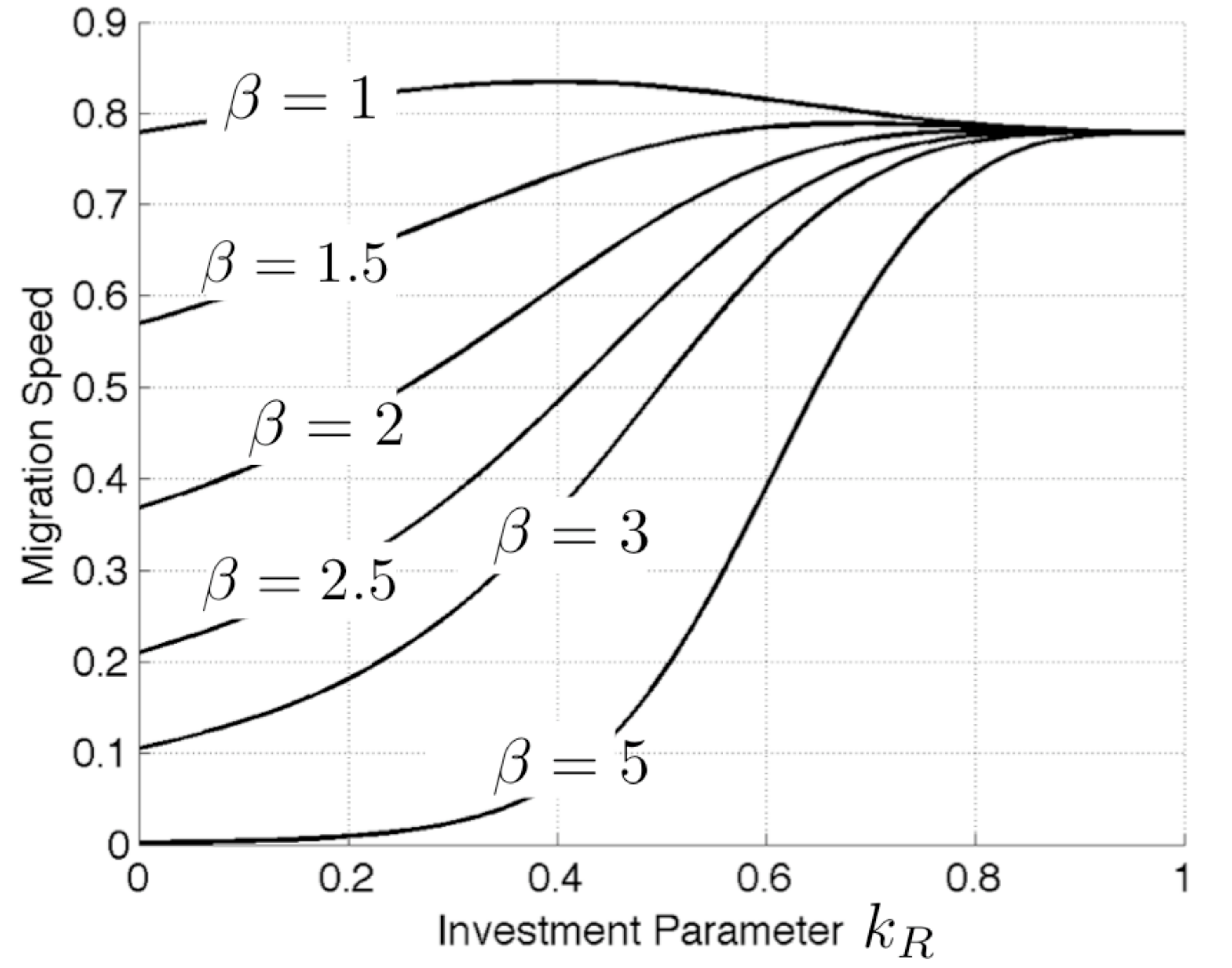}
    \caption{Steady-state migration speed as a function of resident population investment parameter $k_{R}$ and noise parameter $\beta$ for a large population with all-to-all interconnection. }
    \label{f:varybeta}
  \end{center}
\end{figure}
As defined in  \eqref{beta}, the parameter  $\beta$ reflects the strength of the noise from social interactions relative to the noise associated with the tracking process. In Figure \ref{f:varybeta} we see that the migration performance saturates at high levels of investment $k_{R}$, and remains low over greater $k_R$ ranges, for large $\beta$. We use $\beta>2$ in this work to model noisier social interactions relative to tracking (consistent with \cite{torney}); this provides an incentive for individuals to invest in the tracking process.

\subsection{Evolutionary adaptive dynamics \label{ss:ad}}

Now consider the evolution of strategies for such an all-to-all connected population.  A key part of any evolutionary algorithm is the computation of fitness of individuals in the population as a function of strategy distribution, model parameters, environmental conditions, and other such features. In certain cases (such as the all-to-all limit here), fitness can be analytically computed, which allows for an explicit calculation of the outcomes of the evolutionary process using tools from adaptive dynamics \cite{adyn1,adyn2,adyn3}. Adaptive dynamics are well-suited for studying the evolution of a continuous one-dimensional trait in a population undergoing small mutations.

Using \eqref{varss} and \eqref{fitsol}, the fitness of an individual in the resident population with dynamics \eqref{res} is given by 
\begin{equation}
F_R(k_R) =\exp\left( -\frac{k_R^2 + \beta^2(1-k_R)^3}{4(2k_R^2-2k_R+1)} -ck_R^2 \right).
\label{FR}
\end{equation}
Consider a small population of mutants with strategy $k_M$ interacting with each other and with all the residents. The mutants (owing to their small numbers) will experience the same social noise as the residents so their dynamics are 
\begin{equation}
d\tilde x_{M} = -\left[ k_{M}^2 + (1-k_{M})^2\right]\tilde x_{M} \;dt + \sqrt{k_{M}^2 + \beta^2(1-k_{R})(1-k_M)^2}\; dW.
\label{mut}
\end{equation}
Correspondingly, the fitness of individuals in the mutant population is given by
\begin{equation}
F_M(k_R,k_M) =\exp\left( -\frac{k_M^2 + \beta^2(1-k_R)(1-k_M)^2}{4(2k_M^2-2k_M+1)} -ck_M^2 \right).
\label{FM}
\end{equation}
The relative fitness of the mutant strategy in the environment of the resident is known as the differential fitness and is given by
\begin{equation}
S(k_R,k_M) =F_M(k_R,k_M) - F_R(k_R). 
\label{Sm}
\end{equation}

For a given resident strategy $k_R$, the values of $k_M$ that result in $S>0$ correspond to the mutant strategies that when rare can invade the established resident population. Further, a study of the selection landscape $S$ can help us predict conditions for an evolutionarily stable monomorphic population (all individuals having the same strategy) and conditions for evolutionary branching into subpopulations of leaders ($k_i\approx 1$) and followers ($k_i \approx 0$), as a function of the cost $c$ associated with strategy investment.

The evolutionary dynamics of the resident strategy $k_R$ are given by
\begin{equation}
\frac{dk_R}{d\tau} = \gamma \left.\frac{\partial S}{\partial k_M}\right|_{k_M=k_R}=: \gamma\; g(k_R),
\label{addt}
\end{equation}
where $g(k_R)$ is the selection gradient  and $\gamma$ is a positive scalar constant.  We note that the timescale $\tau$ associated with \eqref{addt} corresponds to slow evolutionary time and is different from the fast timescale $t$ associated with the stochastic migrations dynamics \eqref{sysm} and \eqref{res}. For the differential fitness $S$ defined in \eqref{Sm}, \eqref{FR} and \eqref{FM}, singular strategies $k_*$ corresponding to equilibria of \eqref{addt} (defined as $g(k_*)=0$)  are given by the solutions to the following  equation (details in \ref{s:app}): 
\begin{equation}
k_*\left[\beta^2(1-k_*)-1\right](k_*-1)+4ck_*(2k_*^2-2k_*+1)^2=0.
\label{sing}
\end{equation}

A singular strategy $k_*$ is known as a Convergent Stable Strategy (CSS) if it is locally asymptotically stable for the dynamics \eqref{addt}, i.e., if
\begin{equation}
\left.\frac{\partial g}{\partial k_R}\right|_{k_R=k_*}<0. 
\label{css}
\end{equation}
A CSS strategy $k_*$ can be either locally evolutionary stable (local ESS) for the population, or it can be a branching point for the population. The condition for it to be a branching point is  
\begin{equation}
\left.\frac{\partial^2 S} { \partial k_M^2} \right|_{k_M=k_R=k_*}>0.
\label{branchcon}
\end{equation}

For $S$ defined in \eqref{Sm}, the branching condition \eqref{branchcon} evaluates to (details in \ref{s:app})
\begin{equation}
0<k_*<\frac{5-\sqrt{7}}{6}\approx 0.3923.
\label{branch}
\end{equation}
The branching condition  \eqref{branch} is exactly the same as that obtained in \cite{torney}. Parameters $\beta$ and $c$ of the dynamics  that yield CSS singular strategies $k_*\in\left(0,\frac{5-\sqrt 7}{6}\right)$ result in populations with distinct leader and follower groups via evolutionary branching. 

\subsection{Results}

\begin{figure}[ht!]
  \begin{center}
    \includegraphics[width=.6\linewidth]{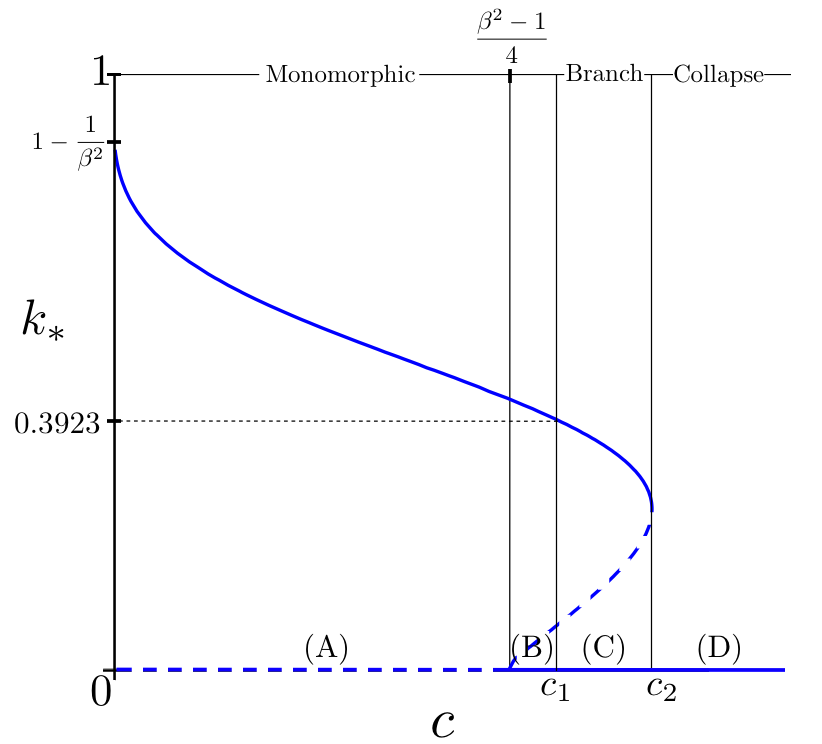}
    \caption{Evolutionary singular strategies $k_*$ as a function of cost parameter $c$. The two sets of singular strategies defined by \eqref{sing} are plotted in blue. One set corresponds to $k_*=0$ and the other corresponds to the curve given by the equation $c=\frac{(1-k_*)[\beta^2(1-k_*)-1]}{4(2k_*^2-2k_*+1)^2}$. Solid curves are CSS strategies, and dashed curves are unstable singular strategies. The regions marked (A)-(D) correspond to the descriptions in the text. Analytical derivations for the cost parameters $\frac{\beta^2-1}{4}$, $c_1$ and $c_2$ that divide the regions are given in \ref{s:app}.}
     \label{f:nice}
  \end{center}
\end{figure}

The bifurcation diagram in Figure \ref{f:nice} summarizes the singular strategy condition \eqref{sing}, CSS condition \eqref{css} and branching condition \eqref{branch} to obtain four distinct sets of evolutionary outcomes (in ranges A, B, C and D) for our model for increasing cost parameter $c$ (calculations in \ref{s:app}):
\begin{enumerate}
\renewcommand{\labelenumi}{(\Alph{enumi})}
\item Monomorphic population: For $0<c<\frac{\beta^2-1}{4}$ there exists only one CSS strategy, and it is evolutionarily stable (since $k_*>\frac{5-\sqrt 7}{6}$), resulting in a monomorphic population with strategy $k_*$. The other singular strategy corresponding to the fully social strategy $k_*=0$ is not convergent stable.  
\item Two local CSS's that are each evolutionarily stable: For $\frac{\beta^2-1}{4}<c<c_1$ there exist two convergent stable strategies, one of which is the fully social strategy $k_*=0$. Both singular strategies are locally evolutionarily stable. $c_1$ is defined implicitly as the solution to the equation $\left.g(\frac{5-\sqrt{7}}{6})\right|_{c=c_1}=0$.   
\item Branching: For $c_1<c<c_2$ the convergent stable interior ($k_*\in(0,1)$) singular strategy satisfies the branching condition \eqref{branch} resulting in the population splitting into distinct leader and follower groups. 
\item Collapse of Migration: For the high cost scenario $c>c_2$, the only singular strategy that exists is the convergent stable fully social strategy $k_*=0$ and the population does not develop any migration ability. 
\end{enumerate}

These four cases constitute a comprehensive picture of the evolutionary dynamics of the migration model \eqref{sysm} in the all-to-all limit, and encompass key features of the branching calculations in \cite{torney} and evolutionary simulations in \cite{guttal}. The existence of two locally evolutionary stable attractors in case (B) above implies that the evolutionary dynamics can potentially yield two-strategy outcomes in the population without evolutionary branching (case (C)). 

A hysteretic effect associated with restoring population migration ability once destroyed is apparent in Figure \ref{f:nice}. In particular, once migration ability in the population is lost for high cost parameter $c>c_2$, the cost parameter needs to be reduced below the level $\frac{\beta^2-1}{4}$ (i.e., below $c_1<c_2$) for migration ability to be regained. This compares to the simulations in \cite{guttal,guttal2} where agent-based models are used to study the effect of habitat fragmentation on the evolution of migration (see also \cite{Fagan2012}). In these simulations, the authors study the impacts of progressively more fragmented habitats on migratory outcomes, and show that once migration ability is lost for a threshold level of fragmentation, much greater habitat recovery is necessary to restore lost migration ability (a hysteretic effect). Higher levels of habitat fragmentation are comparable to higher cost  $c$. 

\begin{figure}[ht!]
  \begin{center}
    \includegraphics[width=1\linewidth]{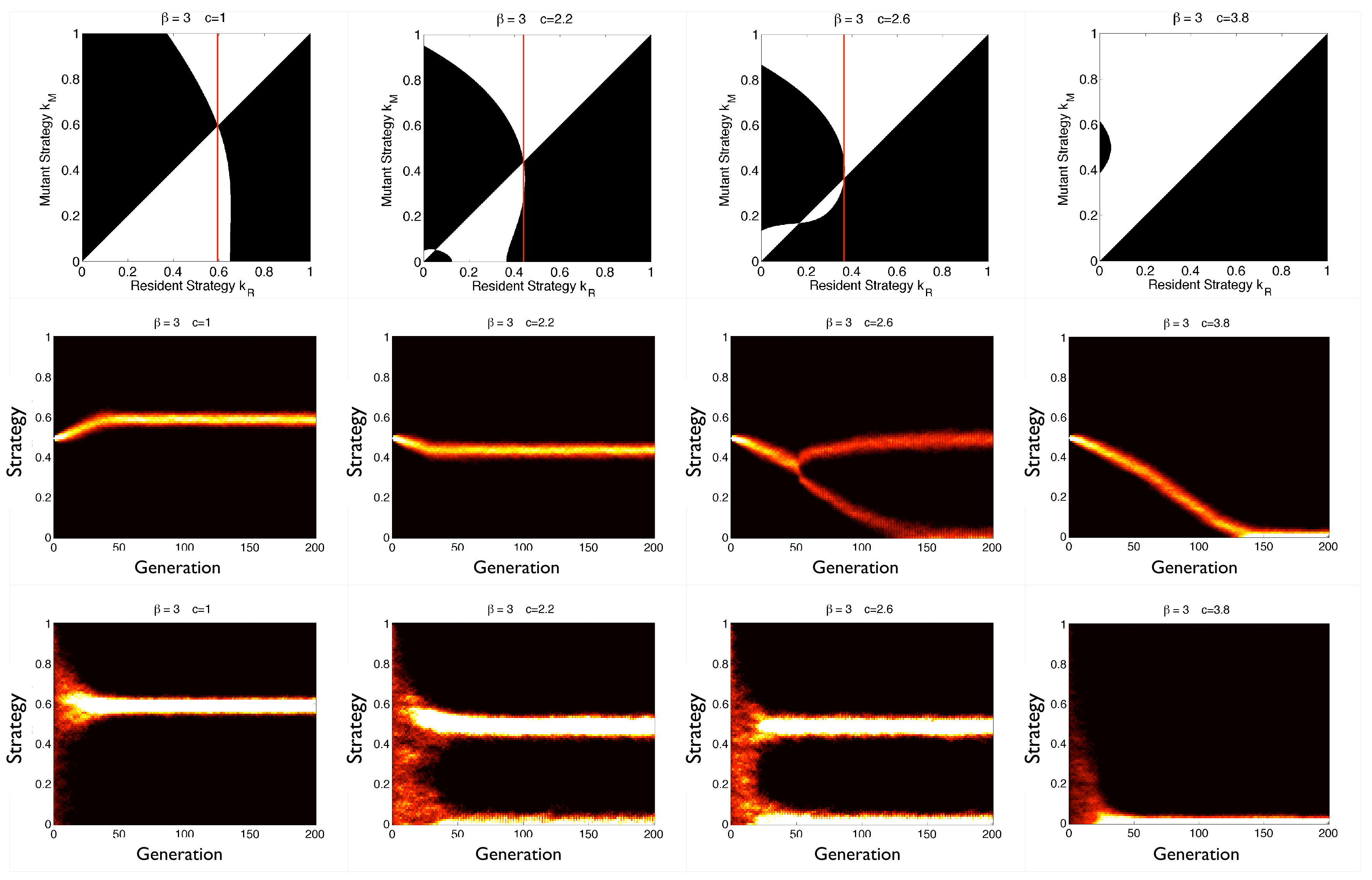}
    \caption{Evolutionary dynamics for the migration model with all-to-all interconnection, noise parameter $\beta=3$, and with increasing cost parameter $c$ from left to right:  $c=1, 2.2, 2.6, 3.8$. The top row shows the pairwise invasibility plots for $k_M$ vs.\ $k_R$, each from 0 to 1. Black regions correspond to differential fitness $S(k_R,k_M)>0$ (mutants can invade) and white regions to $S<0$. The red vertical lines pass through convergent stable interior strategies $k_R=k_*$. The middle row of plots are evolutionary simulations starting with a monomorphic population with strategy $k=0.5$; hot colors correspond to high population density. The bottom row of plots are also evolutionary simulations, but having an initial population with a uniformly randomly distributed strategy $k\in[0,1]$. We use $N=2000$ individuals for these simulations.   The threshold costs as defined in Figure \ref{f:nice}  for $\beta=3$ are $c_1=2.48$ and $c_2=2.77$. }
    \label{f:fire}
  \end{center}
\end{figure}

In Figure \ref{f:fire} we show the pairwise invasibility plots (PIPs) \cite{adyn1,adyn2,adyn3} of the differential fitness $S (k_R , k_M )$ for increasing cost $c$ to illustrate the four sets of outcomes described above. These plots show the sign of $S$ as a function of the resident and mutant population strategies. Dark regions correspond to differential fitness $S > 0$ and allow mutant invasions; white regions correspond to $S < 0$ and prohibit mutant invasions. The population resides on the diagonal, and is monomorphic when a negative differential fitness is associated with the red vertical line through the singular strategy $k_*$. The PIPs in Figure \ref{f:fire} show conditions for an initially monomorphic population for low values of cost parameter $c$ and population branching for intermediate values of $c$. For high values of $c$ we see conditions that prevent individuals from developing any significant investment and therefore any significant migration ability. 

We confirm our predictions from the adaptive dynamics analysis by running evolutionary simulations in the case of a monomorphic initial condition and also a uniformly randomly distributed initial condition as shown in Figure \ref{f:fire} (bottom two rows). These simulations comprise the roulette-wheel selection \cite{garef} and small mutation operations on each generation of a population of $N=2000$ individuals with all-to-all social graph, dynamics \eqref{sysm}, and fitness \eqref{fitsol}.   The fitnesses are computed for each generation using equation (\ref{leq}) of Theorem \ref{th:lyap}.

The columns in Figure \ref{f:fire} from left to right correspond to the cases (A)-(D) respectively. In case (A), both initial conditions result in a monomorphic evolutionary outcome. In case (B), the polymorphic solution for the evolutionary simulation with random initial conditions (Column 2, last Row) is a consequence of the stability of the $k_*=0$ singular strategy, and not  a consequence of branching, as is the case in  (C). Case (D) corresponds to the collapse of migration with all individuals having an insignificant level of investment. 

The analysis in this section shows the range of evolutionary outcomes for the migration model with all-to-all social interconnection. We are particularly motivated by conditions that result in the branching of the population into invested leaders and social followers. In the following section we study the role that limited social interconnection topology plays in the emergence of this evolutionary branching. 


\section{Limited interconnections case \label{s:limited}}

While the all-to-all topology assumption of Section \ref{s:all} allows for a detailed analysis of the evolutionary dynamics in a large population, it is unrealistic for most biological and decentralized artificial systems. In this section, we relax the all-to-all assumption and study the migration model \eqref{sysm} with underlying limited interconnection. 
 
 The Lyapunov equation \eqref{leq} in Theorem \ref{th:lyap} allows us to compute (without simulation) the migratory performance (and correspondingly fitness \eqref{fitsol}) of individuals since the diagonal terms of the steady-state covariance matrix $\Sigma$ are the individual steady-state variances $\sigma_{ss,i}^2$. In this section, we use evolutionary simulations based on fast timescale fitness calculations from \eqref{leq} to study the role that graph connectivity plays in the evolution of branching.

We focus on three classes of social graph topologies of which one is ordered (ring lattice) and two are random (undirected and directed). In each class, a single parameter controls the level of connectivity of the graph. The three classes of graphs used are listed below: 
\begin{itemize}
\item Undirected Ring Lattice:  A graph with $N$ nodes, each connected to $K$ nearest neighbors, $K/2$ on each side, for $K$  even. An undirected edge exists between nodes $i$ and $j$ if and only if $0<\text{min}\left\{|i-j|,N-|i-j|\right\}\leq K/2$. The graph is connected for $K\geq 2$. 
\item Random Undirected (Erd\H{o}s-R\'enyi) \cite{random,newman}: Undirected graph with $N$ nodes. Every edge in the graph exists randomly with a uniform probability  $p$. The expected number of neighbors of a node is $E[K]=Np$ for large $N$. The graph is almost surely connected if  $p>\ln(N)/N$ or equivalently $E[K]>\ln(N)$. 
\item Random Directed \cite{randd1,randd2}: Directed graph with $N$ nodes. Each node has a probability $p$ of having a directed link to every other node in the network. The expected number of neighbors of a node is $E[K]=Np$ for large $N$. 
\end{itemize}

 \begin{figure} [ht!]
 \begin{center}
\includegraphics[width=0.8\linewidth]{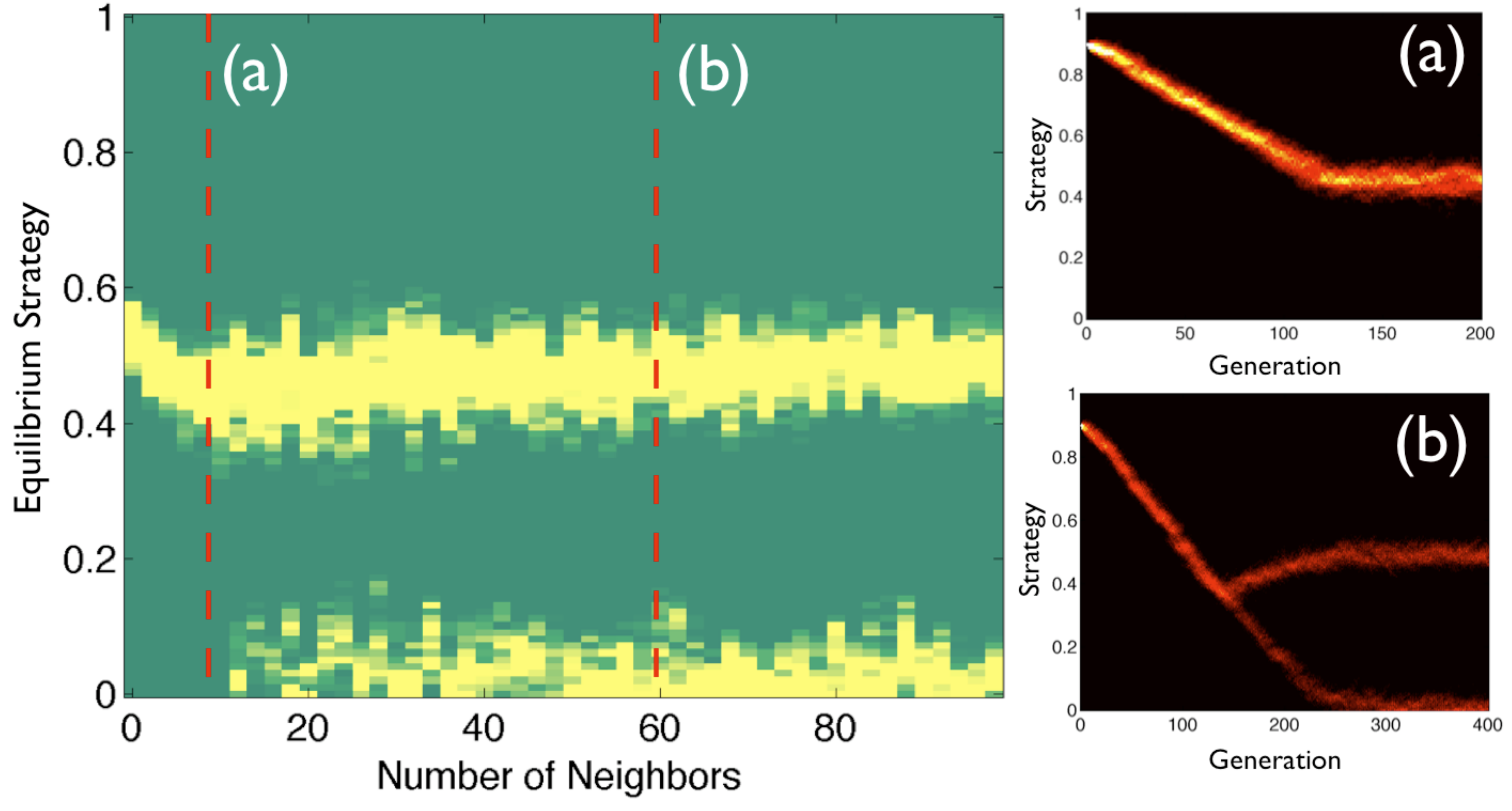}
\caption{Effect of number of neighbors $K$ on the evolutionary outcomes of the migration model. The left plot shows the equilibrium strategy distribution as a function of number of nearest neighbors for the ring lattice graph model with $N=400$ nodes and parameters $\beta=3$ and $c=2.6$; bright colors correspond to higher population density. The two plots on the right labeled (a) and (b) are evolutionary simulations, the steady-state conditions in these plots correspond to the red dashed slices in the left plot. The  two-strategy equilibrium exists only once the graph connectivity exceeds a threshold number of neighbors ($K \approx 15$ for parameters chosen here). }
\label{f:topodemo}
\end{center}
\end{figure} 

For each class of topologies, the parameters $K$ and $p$ allow us to explore a range of connectivities; for $K=p=0$, the social graphs are fully disconnected and individuals must resort to solitary migration with a monomorphic optimal strategy \eqref{optsol}. For $K\rightarrow  N-1$ (for the ring lattice) and $p\rightarrow 1$ (for the random graphs), the social graph is fully connected, resulting in the leader and follower  evolutionary equilibrium for certain parameter choices as discussed in Section \ref{s:all}. Between these two connectivity extremes, intuition suggests that an intermediate level of limited connectivity can provide adequate information flow in the network for followers to leverage the investments made by leaders, thereby resulting in the two distinct populations. In Figures \ref{f:topodemo} and \ref{f:topo9} we confirm this intuition by showing that the transition from a monomorphic solution, to a branched evolutionary solution as a function of topology (parameterized by $K$ and $p$), occurs at an intermediate threshold level of connectivity.   

For the simulation in  Figure \ref{f:topodemo} we use the ring lattice topology for the social graph with $N=400$ individuals and choose a set of parameters for which the all-to-all social graph is known to have a two-strategy solution, $\beta=3$ and $c=2.6$ (see Figure \ref{f:fire}, column 3). For a range of values of the number of neighbors $K=2,4,\cdots,100$, we compute the fitness of individuals using \eqref{leq} and \eqref{fitsol} and evolve the strategies of the populations to an evolutionary steady state. This steady-state strategy distribution is plotted as a function of number of neighbors $K$ in Figure \ref{f:topodemo}. We see that there exists an intermediate threshold for social connectivity (given by $K\approx 15$) that allows for adequate information flow in the network to result in evolutionary branching. That is, social graphs that are much sparser (fewer edges) than the all-to-all case analyzed in Section \ref{s:all} can yield leader and follower evolutionary outcomes. 

In Figure \ref{f:topo9} we present simulations similar to those described for Figure \ref{f:topodemo} above, for all three classes of social graph topologies (ring lattice, undirected random, directed random) and for $N=200,400, 600$. In each case, we see a minimum connectivity threshold for evolutionary branching that is higher than the minimum threshold for the social graph to be connected. We also see that this threshold for branching is not affected by increasing population size $N$ in the range of $N$ considered; i.e., the minimum fraction of the population that each node must be connected to for branching decreases with population size. Further, the location of the threshold is dependent on the class of graph being considered; the two classes of random graphs have lower thresholds than the ordered ring lattice. 

\begin{figure} [ht!]
 \begin{center}
\includegraphics[width=1\linewidth]{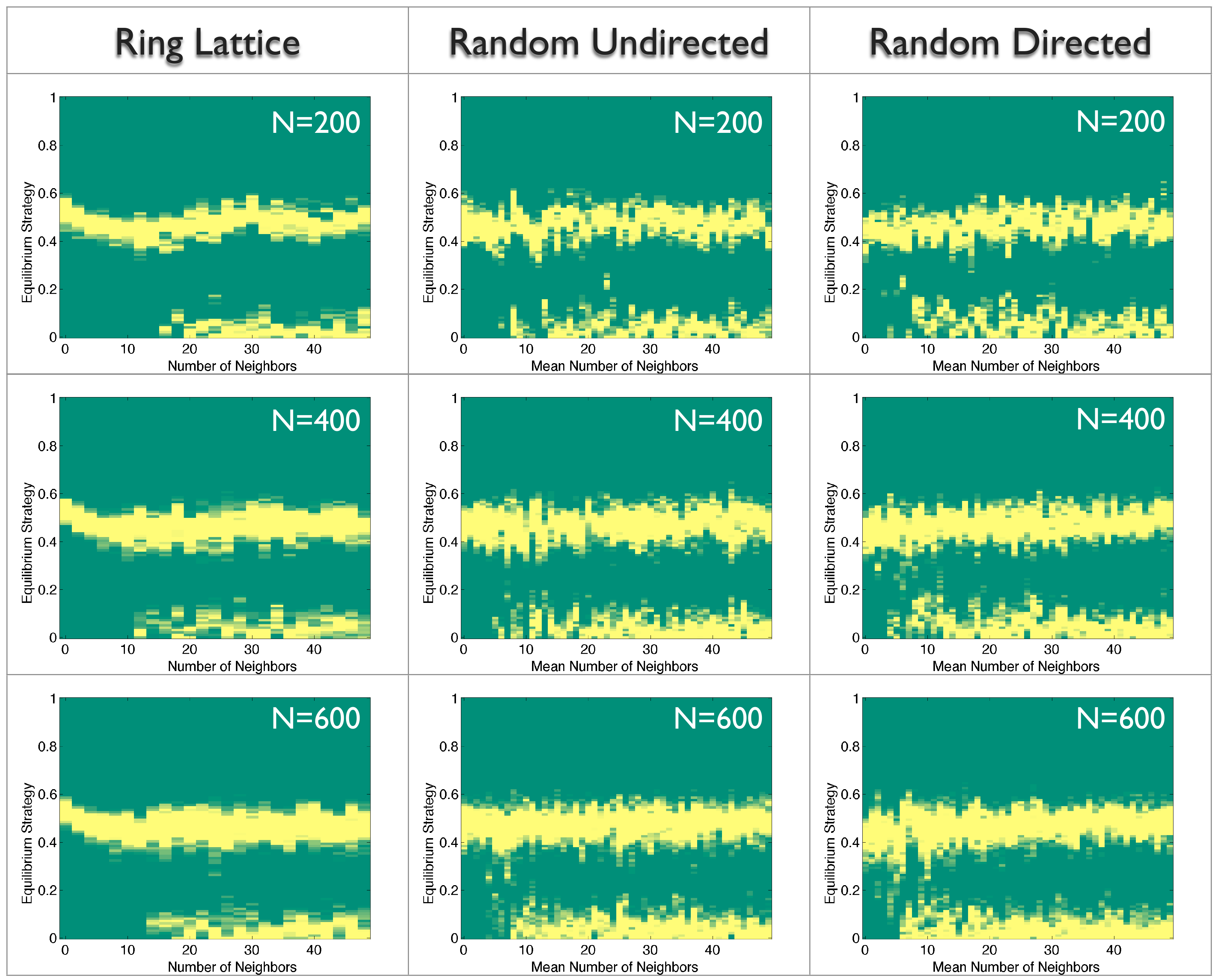}
\caption{Evolutionary equilibria as a function of social graph topology for ring lattice, random undirected and random directed graph models with parameters $\beta=3$ and $c=2.6$; number of nodes $N$ are shown on each plot. The minimum number of neighbors (or mean number for the random graphs)  is independent of the number of nodes $N$ in these plots, $\approx 15$ for the ring lattice, $\approx 9$ for the random undirected graphs and $\approx 8$ for the random directed graphs.  }
\label{f:topo9}
\end{center}
\end{figure} 

The simulations in this section illustrate the significant effect of limited social graph connectivity on the evolution of  branching in migration. In particular we show that social connectivity above threshold levels can yield two-strategy outcomes (leaders and followers) in the evolutionary dynamics. The thresholds depend on the classes of social graph topology being considered.  Determining the analytical minimum connectivity threshold as a function of parameters $\beta$, $c$ and class of graph, is a topic of interest. A natural next step is to look at other classes of graph topologies such as spatially embedded graphs with a topological metric on connectivity (such as in \cite{star1}), and classes of small-world graphs parameterized by a single rewiring parameter \cite{wattsstrogatz, newman}. The ring lattice and undirected random graphs considered here are two extreme limits of the Watts-Strogatz \cite{wattsstrogatz} small world connectivity model.


\section{Adaptive nodes \label{s:adaptive}}
Our analysis of the collective migration model \eqref{sysm} thus far has focused on the evolutionary perspective,  considering the dynamics of networks with large numbers of nodes. As discussed in Section \ref{s:intro}, the stochastic dynamics \eqref{sysm} can be interpreted more generally as collective tracking or learning dynamics in multi-agent systems of small or large size. In this section we shift focus and consider the model \eqref{sysm} from an adaptive perspective on small dynamic networks. This analysis is motivated in part by questions about leadership, task assignment, learning and robust adaptive behavior in multi-agent robotic systems.

We consider a simple model of greedy local optimization by nodes on a graph, which yields individual adaptation of investments $k_i$.  For this model we show bifurcations as a function of cost that yield leader-follower emergent behavior as equilibria of the adaptive process. We also illustrate the critical role played by graph topology in determining the {\em location } of leaders in the network of adaptive nodes using several examples.

Consider a system of interconnected agents with fast timescale tracking dynamics given by \eqref{sysm}. Further, suppose that each agent seeks to maximize its local utility function by adapting its investment parameter $k_i$. We assume that the utility function for each agent $U_i$ is given by the fitness function \eqref{fitsol}, 
\begin{equation}
U_i = \exp\left(\frac{-\sigma_{ss,i}^2}{2} \right) \exp\left( -ck_i^2\right).
\label{U}
\end{equation}
In this setting, the utility function for a focal agent $i$ depends on the agent's investment $k_i$, as well as the investments of other nodes in the network; we assume that an agent can measure its own utility, but not the investments of other agents. Each agent modifies its investments $k_i$ on a slow timescale by  climbing the gradient of its local utility $U_i$ to reach its local maximum,
\begin{equation}
\dot{k}_i = \frac{dk_i}{dt} = \frac{\partial U_i}{\partial k_i},\;\;i=\left\{ 1,\cdots,N \right\}.
\label{dk}
\end{equation}  
Our goal is to study the outcomes of this  adaptive process by computing equilibria of the dynamics defined by \eqref{dk}, \eqref{U} and \eqref{sysm}, and studying their bifurcations. 

 \subsection{$N=2$ case}
We first look at the simplest case of the dynamics with $N=2$ nodes, with an underlying all-to-all graph. In this case, the steady-state covariance matrix $\Sigma$ (from \eqref{leq}) can be computed analytically \cite{gardiner} as
\begin{equation}
\Sigma =\frac{(\text{Det } M)SS^T + \left[-M+(\text{Tr }M)\right]SS^T\left[-M+(\text{Tr }M)\right]^T}{-2(\text {Det } M)(\text{Tr } M)},
\label{gar}
\end{equation}
where matrices $M$ and $S$ are defined in \eqref{sysm} and \eqref{nf}. For each pair $\{i,j\}=\{1,2\},\{2,1\}$, the diagonal elements of $\Sigma$ from \eqref{gar} are given by
\begin{align}
\label{sij}
&\sigma_{ss,i}^2 = \frac{f_{ij} \left(k_i-1\right){}^4+f_{ji} \left(2 k_j^2-2 k_j+1\right){}^2+f_{ji} g_{ij} }{4 g_{ij} \left( k_i^2 + k_j^2 -k_i-k_j + 1\right)},\\
&\text{where } f_{ij} = \left(k_j^2-\beta ^2 \left(k_i-1\right) \left(k_j-1\right){}^2\right),\nonumber \\
&\text{and    } \;\;\;g_{ij} = \left(3 k_j^2-2 k_j+1\right) k_i^2-2 k_j^2 k_i+k_j^2\nonumber.
\end{align}

Substituting \eqref{sij} in \eqref{dk} and \eqref{U}, we compute the equilibria $\bm{k}_{eq}$ of the dynamics \eqref{dk} and \eqref{sysm} and their stability as a function of increasing cost parameter $c$. Analytical expressions of the equilibria are complicated; we illustrate the equilibria  for $\beta=3$ in Figure \ref{f:two}. For low cost, both individuals make a significant equal investment corresponding to the symmetric equilibrium $k_{eq,1}=k_{eq,2}\gg 0$. As cost increases, the level of this equilibrium investment decreases and eventually a pair of stable leader-follower equilibria appear via two saddle-node bifurcations. At this cost level, it is quite interesting to note that both the symmetric solution (monomorphic population) and the two-strategy solution (branching populations) coexist as stable solutions.  However, as the cost increases further, the symmetric stable equilibrium loses stability in a subcritical pitchfork bifurcation, leaving the leader-follower pair of stable equilibria and an unstable symmetric saddle equilibrium.    A phase portrait for the slow timescale dynamics (strategies $k_1$ and $k_2$) is shown in Figure~\ref{f:two} for each of three values of cost $c$ to illustrate the different sets of equilibrium solutions.
\begin{figure} [ht!]
 \begin{center}
\includegraphics[width=0.9\linewidth]{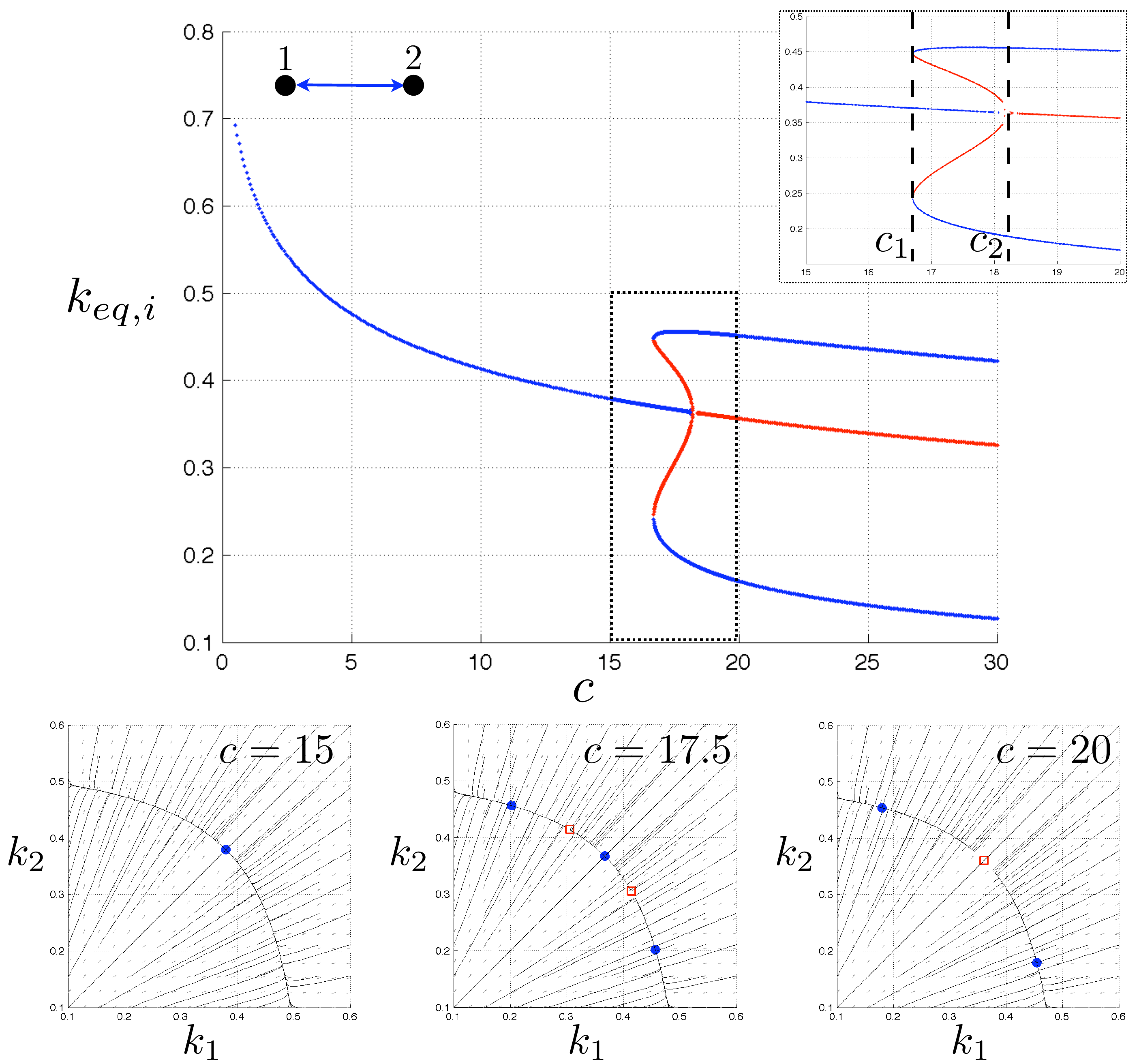}
\caption{Bifurcations for the adaptive node dynamics \eqref{dk} with $N=2$ nodes, an underlying all-to-all social graph, and noise parameter $\beta=3$. The top plot shows the two components of $\bm{k}_{eq}$ (equilibria of the dynamics \eqref{dk} and \eqref{sysm}) as a function of the cost parameter $c$. Stable sinks are marked blue and unstable saddles are marked red. The inset shows a zoomed in view of the region with $15\leq c \leq20$ marked in the dotted square. The dashed lines in the inset $c_1\approx 16.7$ and $c_2 \approx 18.2$ denote the saddle-node and pitchfork bifurcation points respectively. The row of bottom plots are phase portraits for the slow timescale dynamics with parameter $c$ as indicated; the circles are stable sinks and the squares are saddles. These plots remain qualitatively the same for different values of $\beta>2$; the bifurcation points $c_1$ and $c_2$ move further to the right for higher $\beta$.    }
\label{f:two}
\end{center}
\end{figure}

\subsection{$N>2$ all-to-all}
\begin{figure} [ht!]
 \begin{center}
\includegraphics[width=0.9\linewidth]{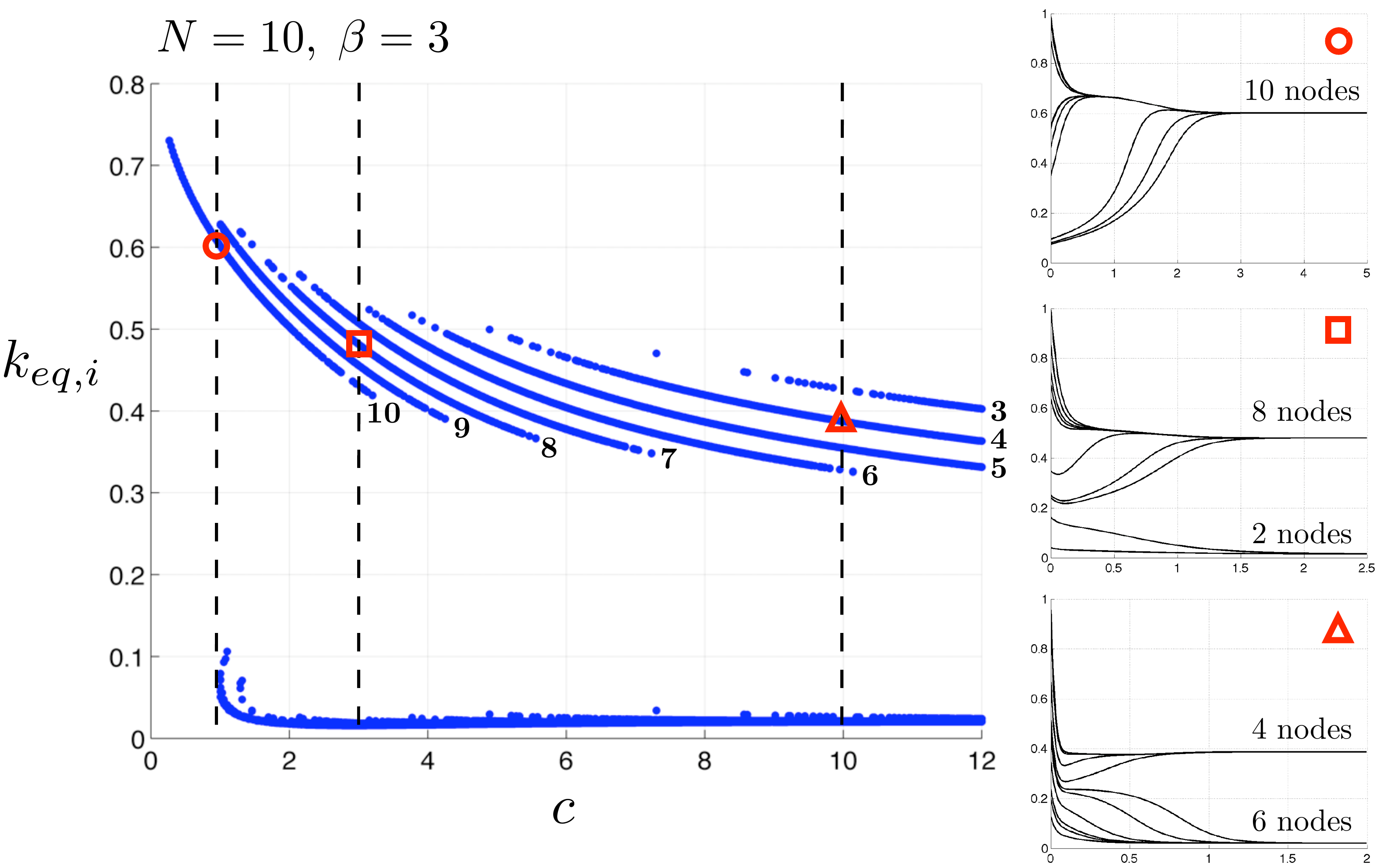}
\caption{Bifurcations for the adaptive node dynamics \eqref{dk} and \eqref{sysm} with $N=10$ nodes, an underlying all-to-all social graph, and noise parameter $\beta=3$. The left plot shows stable equilibria of the dynamics as a function of $c$; plots of $k_i$ vs.\ time for the points marked with the circle ($c=1$), square ($c=3$) and triangle ($c=10$) are shown on the right. The labels on the left plot indicate the number of leaders in each branch of stable solutions. It can be observed that bifurcations yield fewer leaders for increasing cost.}
\label{f:ten}
\end{center}
\end{figure} 

For larger networks with underlying all-to-all graph and dynamics \eqref{dk} and \eqref{sysm}, bifurcations in cost $c$ yield generalizations of the bifurcations observed in the $N=2$ case.  Figure~\ref{f:ten} illustrates these bifurcations in the case of $N=10$.   The plot on the left shows the equilibrium values of the strategies $k_i$ as a function of $c$.  For a given value of $c$ there are multiple co-existing stable leader-follower solutions, each solution differing by the number of leaders (and therefore the number of followers).   For example, at $c=10$ there are distinct stable solutions with 3, 4, 5 and 6 leaders; the adaptation of $k_i$ as a function of time is shown to the right at the bottom (marked by a triangle) for initial conditions that yields 4 leaders and 6 followers.  Other examples are shown in the case $c=1$ and $c=3$.   It can be observed from Figure~\ref{f:ten} that the fraction of nodes in the leader populations decreases with increasing cost parameter $c$ as a consequence of several bifurcations in the dynamics. 

We note that although the underlying social graph is undirected and all-to-all, the adapted equilibrium solutions correspond to  strongly directed  graphs.  In particular, the edges in the graph from leaders to their neighbors have small weights, whereas the edges from followers to their neighbors have large weights.

\subsection{Star topology and other topologies}
Above we show how the cost parameter $c$ influences the number of leaders and followers that emerge from the adaptive dynamics in the case of an underlying all-to-all social graph.   Here we study how the topology of limited social interconnections influences the emergence and location of leaders and followers.   We illustrate this for a social graph defined by a star in Figure \ref{f:star}; the star graph has a significant asymmetry since the central node can sense all  $N-1$ fringe nodes, whereas the fringe nodes can each only sense the central node.   

From the left plot in Figure~\ref{f:star}, it can be seen that at low values of cost $c$, the fringe nodes of the star invest strongly in the external signal as leaders while the central node leverages these neighbors as a follower with small investment.  In this case, the fringe nodes pay almost no attention to social cues from the central node  while the central node relies almost exclusively on social cues from the fringe nodes (an exploding star graph).  At intermediate cost $c$, as shown in the middle plot, all nodes make similar investments (monomorphic population).   At high cost $c$, as shown in the right plot, the central node adapts to become the leader and ignores the fringe nodes, while all the fringe nodes leverage the central node's investment as followers (an imploding star graph).
\begin{figure} [ht!]
 \begin{center}
\includegraphics[width=1\linewidth]{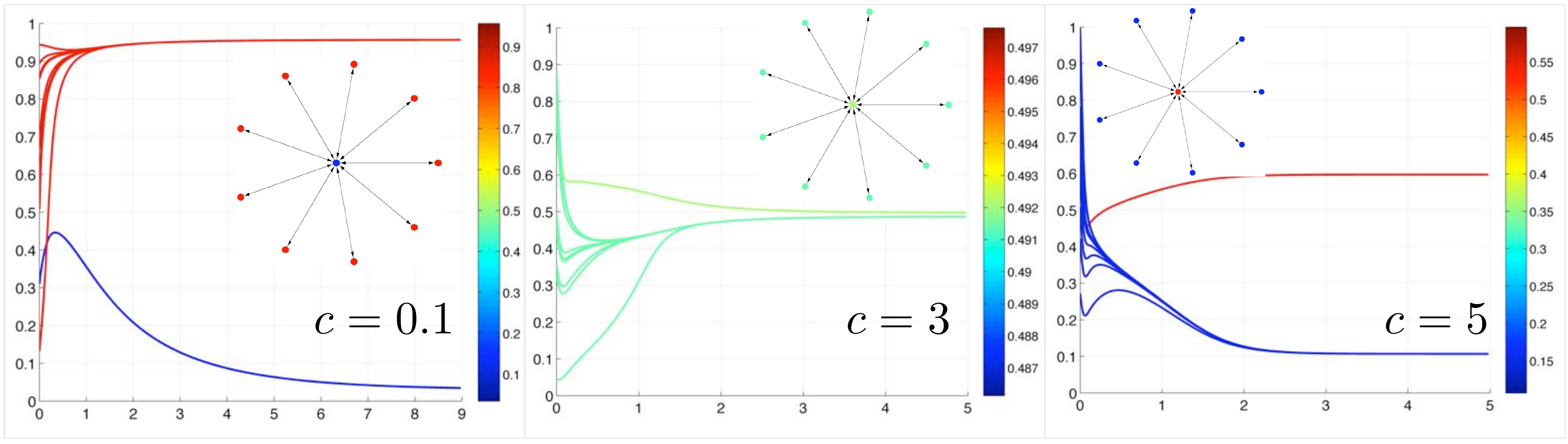}
\caption{Role of topology and cost parameter $c$ for the adaptive node dynamics \eqref{dk} and \eqref{sysm} with $N=10$, $\beta=3$, and underlying social graph given by the  star graph.  In the star graph,  the central node can sense all the fringe nodes but the fringe nodes can each only sense the central node. Each plot shows a simulation of the ten strategies $k_i$ versus time; the parameter $c$ is indicated on each plot and can be seen to increase from  left to right.   The color-scale corresponds to the magnitude of equilibrium investment $k_{eq,i}$ (higher values are hotter colors); these colors are superimposed on the picture of the star graph for each of the three simulations shown so that nodes with hotter colors can be observed to be the invested leaders.   }
\label{f:star}
\end{center}
\end{figure} 

We show equilibrium outcomes for three more underlying social graph topologies in Figure \ref{f:color}. For the undirected ring lattice shown on the left, and parameter values indicated, alternate nodes adapt to become leaders (red) while the others become followers (blue).  This is not surprising since an agent need not invest if both of its immediate neighbors are  heavily invested.  
For more complicated topologies, the  connection between topology and location of emergent leaders is more challenging to interpret. Making the connection requires the development of a graph and investment dependent metric to rank nodes for their leadership potential, such as the {\it information centrality} metric used to rank certainty of nodes in networks of stochastic evidence accumulators  \cite{poulakakisTAC}. 

\begin{figure} [ht!]
 \begin{center}
\includegraphics[width=.8\linewidth]{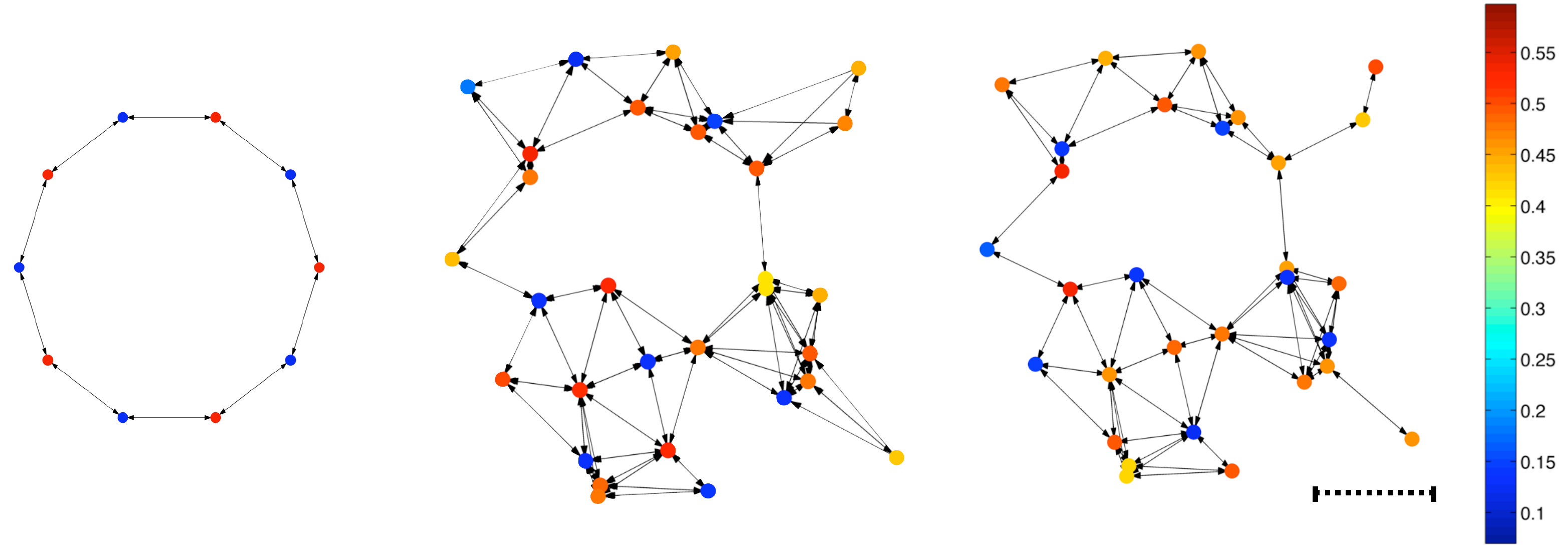}
\caption{Role of topology in determining locations of leaders on the social graph. Parameters for all three plots are $\beta=3$ and $c=4$. The colors of nodes on each plot correspond to equilibrium investments $k_i$ for the dynamics \eqref{dk} and \eqref{sysm} with magnitudes indicated in the color-bar. The left graph is an undirected star with $N=10$ nodes. The two right plots show a random spatial embedding of nodes with two different interconnection models. In the middle plot, each node is connected to its three nearest neighbors (topological distance) and in the right-most plot, each node is connected to all neighbors within a fixed radius given by the dashed line drawn (metric distance).    }
\label{f:color}
\end{center}
\end{figure}


\section{Final remarks \label{s:final}}
The study of leadership in dynamic networks has received significant attention in both biology and multi-agent robotics. In biology there has been great interest in  determining conditions for the stable evolution of leadership in systems of socially interacting agents where there is incentive for individuals to  ``free-ride'' on the investment of others. In networked robotic systems, the leader-follower paradigm has been studied in a variety of contexts as a tool to design control protocols that achieve desired group performance. 

In this work, we derive an analytically tractable adaptive dynamic network model of collective migration in which investment (leadership) is costly and social interactions are cheap, and we use this model to  study the role of the social interconnection graph in the evolution of leadership.    We analyze bifurcations in the dynamics as a function of the cost of investment by leveraging a two-timescale dynamic with fast timescale migration dynamics and slow timescale adaptation of  investments and social graph edge weights.     In the all-to-all limit of the evolutionary model,  
our complete bifurcation analysis explains phenomena such as the observed hysteretic effect associated with recovery of migration in fragmented environments.   In the general case of social interactions with limited connectivity, we prove necessary and sufficient conditions on the graph and investments such that  the fitness of individuals in the stochastic migration model can be computed from a Lyapunov equation.  We use this solution to show a minimum connectivity threshold in random and ordered networks above which there is evolutionary branching into leader and follower populations.    

Finally, we use our evolutionary collective migration model and fitness equation to study smaller networks inspired by collective robotic systems; we replace the replication and mutation dynamics of the slow evolutionary timescale with dynamics in which agents adapt their individual investment and social interaction strategy according to a greedy optimization of their own utility function (fitness).   We study bifurcations in these dynamics as a function of investment cost and show how the topology of the underlying network graph plays a critical role in determining the emergence and location of leaders in the adaptive system. 
 
Further work is needed to derive rigorous minimum bounds on connectivity for branching of populations into leader and follower groups and generalization of our nonlinear dynamics analysis of adaptive networks using formal graph metrics.   It is also of great interest to consider relaxing modeling assumptions and  generalizing the form of the dynamics.  Another important avenue for future work is to investigate where the top-down engineering design approach for group performance meets the bottom-up adaptive approach taken in this paper, i.e., to explain how  group performance emerges from the adaptive network dynamic model and how we might leverage the bottom-up approach for design of high performing and resilient network dynamics in collective tasks.    


\section*{Acknowledgement} The authors  thank Vishwesha Guttal, Colin Torney, Philip Holmes and Simon Levin for helpful discussion.  This work was supported in part by ARO grant W911NG-11-1-0385, ONR grant N00014-09-1-1074, NSF grant ECCS-1135724, and AFOSR grant FA9550-07-1-0-0528. 


\appendix
\section{Adaptive dynamics calculations\label{s:app}}
In this section we discuss the details of the adaptive dynamics analysis from Section \ref{ss:ad}. Define the function $$ G(k_R,k_M) = \frac{k_M^2+(1-k_M)^2\beta^2(1-k_R)}{4(2k_M^2-2k_M+1)} +ck_M^2.$$
Then the differential fitness from \eqref{Sm} is given by $$S=\exp[-G(k_R,k_M)] - \exp[-G(k_R,k_R)].$$
The selection gradient $g(k_R)$ is given by 
\begin{align*}
g(k_R)=\left.\frac{\partial S}{\partial k_M}\right|_{k_M=k_R}&=-\exp[-G(k_R,k_R)] \times \\
&\left( \frac{k_M(1-k_M)[1+\beta^2(k_R-1)]}{2(2k_M^2-2k_M+1)^2} +2ck_M\right).
\end{align*}
Solving for the singular strategy condition $g(k_*)=0$ gives the expression \eqref{sing} $$k_*(1-k_*)[1+\beta^2(k_*-1)] + 4ck_*(2k_*^2-2k_*+1)^2 =0.$$
This expression has two sets of solutions that are plotted in Figure \ref{f:nice}. One set corresponds to $k_*=0$ and the other is defined implicitly by the equation 
\begin{equation}
c=\frac{(k_*-1)[1+\beta^2(k_*-1)]}{4(2k_*^2-2k_*+1)^2}=:\tilde{c}(k_*) .
\label{ctilde}
\end{equation}
To determine conditions for evolutionary branching we compute 
\begin{align*}
\left.\frac{\partial^2S}{\partial k_M^2} \right|_{k_M=k_R=k_*}&= -\exp \left[-ck_*^2 + \frac{\beta^2(k_*-1)^3-k_*^2}{4-8k_*+8k_*^2} \right]\times \\ 
&\frac{\left(1+\beta^2 (-1+k_*)\right) k_* \left(3-10 k_*+6 k_*^2\right)}{2 \left(1-2 k_*+2 k_*^2\right)^3}.
\end{align*} 
Hence the branching condition $\left.\frac{\partial^2S}{\partial k_M^2} \right|_{k_M=k_R=k_*}>0$ corresponds to 
\begin{equation}
\frac{\left(\beta^2 (1-k_*)-1\right) k_* \left(3-10 k_*+6 k_*^2\right)}{2 \left(1-2 k_*+2 k_*^2\right)^3}>0 .
\label{ineqcond}
\end{equation}
The zeros of the function above in the range $k_*\in[0,1]$ are $0$, $\frac{5-\sqrt{7}}{6}$, and $1-\frac{1}{\beta^2}$. A derivative test shows that the condition \eqref{ineqcond} is satisfied for $k_*\in\left(0, \frac{5-\sqrt{7}}{6}\right)$.

The critical cost parameter $c_1$ in Figure \ref{f:nice}, corresponds to the maximum singular value $k_*$ for branching and is given from \eqref{ctilde} by $c_1=\tilde{c}\left( \frac{5-\sqrt{7}}{6}\right)$. The parameter $c_2$ is determined by calculating the local maximum of the function $\tilde{c}(k_*)$  as seen in Figure \ref{f:nice}. We use the notation $\tilde{c}(k_{crit})=c_2$. $k_{crit}$ can be calculated analytically and is given by a particular root of a cubic equation. The analytical expression for $k_{crit}$ (and correspondingly $c_2$) is cumbersome and hence is left out of this text. Nonetheless, the sketch in Figure \ref{f:nice} clearly conveys the main ideas. 

Finally we discuss the convergence stability of the singular strategies. The convergence stability condition is given by \eqref{css}. For the $k_*=0$ singular strategy, $$ \left.\frac{\partial g}{\partial k_R}\right|_{k_R=0}=\exp\left[\frac{-\beta^2}{4}\right]\frac{\beta^2-1-4c}{2}, \text{ hence } \left.\frac{\partial g}{\partial k_R}\right|_{k_R=0}<0 \iff c>\frac{\beta^2-1}{4}.$$

For the second singular strategy curve defined implicitly by \eqref{ctilde}, the derivative term in the convergence stability condition evaluates to
$$\left.\frac{\partial g}{\partial k_R}\right|_{k_R=k_*} = \frac{k_* \left[-3+10 k_*-6 k_*^2-2 \beta^2 \left(2 k_*^3 -6 k_*^2+5 k_*-1\right)\right] e^{-G(k_*,k_*)}}{2 \left(1-2 k_*+2 k_*^2\right)^3}.$$
The interior root of $\left.\frac{\partial g}{\partial k_R}\right|_{k_R=k_*}=0$ is precisely the value $k_{crit}$ that maximizes $\tilde{c}(k_*)$ (since $c=\tilde{c}(k_*) \equiv g(k_*)=0$); this root corresponds to $c_2$ (see above). Hence one can verify that the singular strategies corresponding to the curve \eqref{ctilde} are stable for $k_*>k_{crit}$ and unstable for $k_*<k_{crit}$ as shown in Figure \ref{f:nice}. 






\bibliographystyle{model1-num-names}
\bibliography{leader_bib}

\begin{thebibliography}{39}
\expandafter\ifx\csname natexlab\endcsname\relax\def\natexlab#1{#1}\fi
\providecommand{\bibinfo}[2]{#2}
\ifx\xfnm\relax \def\xfnm[#1]{\unskip,\space#1}\fi
\bibitem[{Shaw and Levin(2011)}]{shaw}
\bibinfo{author}{A.~K. Shaw}, \bibinfo{author}{S.~A. Levin},
\newblock \bibinfo{title}{To breed or not to breed: a model of partial
  migration},
\newblock \bibinfo{journal}{Oikos} \bibinfo{volume}{120} (\bibinfo{year}{2011})
  \bibinfo{pages}{1871--1879}.
\bibitem[{Holland et~al.(2006)Holland, Wikelski, and Wilcove}]{insects}
\bibinfo{author}{R.~A. Holland}, \bibinfo{author}{M.~Wikelski},
  \bibinfo{author}{D.~S. Wilcove},
\newblock \bibinfo{title}{How and why do insects migrate?},
\newblock \bibinfo{journal}{Science} \bibinfo{volume}{313}
  (\bibinfo{year}{2006}) \bibinfo{pages}{794--796}.
\bibitem[{Fryxell and Sinclair(1988)}]{herb}
\bibinfo{author}{J.~M. Fryxell}, \bibinfo{author}{A.~R.~E. Sinclair},
\newblock \bibinfo{title}{Causes and consequences of migration by large
  herbivores},
\newblock \bibinfo{journal}{Trends in Ecology \& Evolution} \bibinfo{volume}{3}
  (\bibinfo{year}{1988}) \bibinfo{pages}{237--241}.
\bibitem[{Berthold et~al.(1992)Berthold, Helbig, Mohr, and Querner}]{birdmig}
\bibinfo{author}{P.~Berthold}, \bibinfo{author}{A.~J. Helbig},
  \bibinfo{author}{G.~Mohr}, \bibinfo{author}{U.~Querner},
\newblock \bibinfo{title}{Rapid microevolution of migratory behaviour in a wild
  bird species},
\newblock \bibinfo{journal}{Nature} \bibinfo{volume}{360}
  (\bibinfo{year}{1992}) \bibinfo{pages}{668--670}.
\bibitem[{Guttal and Couzin(2010)}]{guttal}
\bibinfo{author}{V.~Guttal}, \bibinfo{author}{I.~D. Couzin},
\newblock \bibinfo{title}{Social interactions, information use, and the
  evolution of collective migration},
\newblock \bibinfo{journal}{Proceedings of the National Academy of Sciences}
  \bibinfo{volume}{107} (\bibinfo{year}{2010}) \bibinfo{pages}{16172--16177}.
\bibitem[{Torney et~al.(2010)Torney, Levin, and Couzin}]{torney}
\bibinfo{author}{C.~J. Torney}, \bibinfo{author}{S.~A. Levin},
  \bibinfo{author}{I.~D. Couzin},
\newblock \bibinfo{title}{Specialization and evolutionary branching within
  migratory populations},
\newblock \bibinfo{journal}{Proceedings of the National Academy of Sciences}
  \bibinfo{volume}{107} (\bibinfo{year}{2010}) \bibinfo{pages}{20394--20399}.
\bibitem[{Fagan et~al.(2012)Fagan, Cantrell, Cosner, Mueller, and
  Noble}]{Fagan2012}
\bibinfo{author}{W.~F. Fagan}, \bibinfo{author}{R.~S. Cantrell},
  \bibinfo{author}{C.~Cosner}, \bibinfo{author}{T.~Mueller},
  \bibinfo{author}{A.~E. Noble},
\newblock \bibinfo{title}{Leadership, social learning, and the maintenance (or
  collapse) of migratory populations},
\newblock \bibinfo{journal}{Theoretical Ecology} \bibinfo{volume}{5}
  (\bibinfo{year}{2012}) \bibinfo{pages}{253--264}.
\bibitem[{Gould(2008)}]{nav}
\bibinfo{author}{J.~L. Gould},
\newblock \bibinfo{title}{Animal navigation: the evolution of magnetic
  orientation},
\newblock \bibinfo{journal}{Current Biology} \bibinfo{volume}{18}
  (\bibinfo{year}{2008}) \bibinfo{pages}{R482--R484}.
\bibitem[{Vickers(2000)}]{odor}
\bibinfo{author}{N.~J. Vickers},
\newblock \bibinfo{title}{Mechanisms of animal navigation in odor plumes},
\newblock \bibinfo{journal}{The Biological Bulletin} \bibinfo{volume}{198}
  (\bibinfo{year}{2000}) \bibinfo{pages}{203--212}.
\bibitem[{Torney et~al.(2009)Torney, Neufeld, and Couzin}]{context}
\bibinfo{author}{C.~Torney}, \bibinfo{author}{Z.~Neufeld},
  \bibinfo{author}{I.~D. Couzin},
\newblock \bibinfo{title}{Context-dependent interaction leads to emergent
  search behavior in social aggregates},
\newblock \bibinfo{journal}{Proceedings of the National Academy of Sciences}
  \bibinfo{volume}{106} (\bibinfo{year}{2009}) \bibinfo{pages}{22055--22060}.
\bibitem[{Trepat et~al.(2009)Trepat, Wasserman, Angelini, Millet, Weitz,
  Butler, and Fredberg}]{cellmig}
\bibinfo{author}{X.~Trepat}, \bibinfo{author}{M.~R. Wasserman},
  \bibinfo{author}{T.~E. Angelini}, \bibinfo{author}{E.~Millet},
  \bibinfo{author}{D.~A. Weitz}, \bibinfo{author}{J.~P. Butler},
  \bibinfo{author}{J.~J. Fredberg},
\newblock \bibinfo{title}{Physical forces during collective cell migration},
\newblock \bibinfo{journal}{Nature Physics} \bibinfo{volume}{5}
  (\bibinfo{year}{2009}) \bibinfo{pages}{426--430}.
\bibitem[{Couzin et~al.(2005)Couzin, Krause, Franks, and Levin}]{couzin_leader}
\bibinfo{author}{I.~D. Couzin}, \bibinfo{author}{J.~Krause},
  \bibinfo{author}{N.~R. Franks}, \bibinfo{author}{S.~A. Levin},
\newblock \bibinfo{title}{Effective leadership and decision-making in animal
  groups on the move},
\newblock \bibinfo{journal}{Nature} \bibinfo{volume}{433}
  (\bibinfo{year}{2005}) \bibinfo{pages}{513--516}.
\bibitem[{Simpson and Sword(2010)}]{migcom}
\bibinfo{author}{S.~J. Simpson}, \bibinfo{author}{G.~A. Sword},
\newblock \bibinfo{title}{Evolving migration},
\newblock \bibinfo{journal}{Proceedings of the National Academy of Sciences}
  \bibinfo{volume}{107} (\bibinfo{year}{2010}) \bibinfo{pages}{16753--16754}.
\bibitem[{Poulakakis et~al.(2012)Poulakakis, Scardovi, and
  Leonard}]{poulakakisTAC}
\bibinfo{author}{I.~Poulakakis}, \bibinfo{author}{L.~Scardovi},
  \bibinfo{author}{N.~E. Leonard},
\newblock \bibinfo{title}{Node classification in networks of stochastic
  evidence accumulators},
\newblock \bibinfo{journal}{{arXiv:1210.4235 [cs.SY]}}  (\bibinfo{year}{2012}).
\bibitem[{Leonard and Olshevsky(2012)}]{olshevskySICON}
\bibinfo{author}{N.~E. Leonard}, \bibinfo{author}{A.~Olshevsky},
\newblock \bibinfo{title}{Cooperative learning in multi-agent systems from
  intermittent measurements},
\newblock \bibinfo{journal}{{arXiv:1209.2194 [math.OC]}}
  (\bibinfo{year}{2012}).
\bibitem[{Geritz et~al.(1997{\natexlab{a}})Geritz, Kisdi, Mesz{\'e}na, and
  Metz}]{adyn1}
\bibinfo{author}{S.~A.~H. Geritz}, \bibinfo{author}{E.~Kisdi},
  \bibinfo{author}{G.~Mesz{\'e}na}, \bibinfo{author}{J.~A.~J. Metz},
\newblock \bibinfo{title}{Evolutionarily singular strategies and the adaptive
  growth and branching of the evolutionary tree},
\newblock \bibinfo{journal}{Evolutionary Ecology} \bibinfo{volume}{12}
  (\bibinfo{year}{1997}{\natexlab{a}}) \bibinfo{pages}{35--57}.
\bibitem[{Geritz et~al.(1997{\natexlab{b}})Geritz, Metz, Kisdi, and
  Mesz{\'e}na}]{adyn2}
\bibinfo{author}{S.~A.~H. Geritz}, \bibinfo{author}{J.~A.~J. Metz},
  \bibinfo{author}{{\'E}.~Kisdi}, \bibinfo{author}{G.~Mesz{\'e}na},
\newblock \bibinfo{title}{Dynamics of adaptation and evolutionary branching},
\newblock \bibinfo{journal}{Physical Review Letters} \bibinfo{volume}{78}
  (\bibinfo{year}{1997}{\natexlab{b}}) \bibinfo{pages}{2024--2027}.
\bibitem[{Paley et~al.(2007)Paley, Leonard, Sepulchre, Grunbaum, and
  Parrish}]{paleyosc}
\bibinfo{author}{D.~A. Paley}, \bibinfo{author}{N.~E. Leonard},
  \bibinfo{author}{R.~Sepulchre}, \bibinfo{author}{D.~Grunbaum},
  \bibinfo{author}{J.~K. Parrish},
\newblock \bibinfo{title}{Oscillator models and collective motion: Spatial
  patterns in the dynamics of engineered and biological networks},
\newblock \bibinfo{journal}{Control Systems Magazine, IEEE}
  \bibinfo{volume}{27} (\bibinfo{year}{2007}) \bibinfo{pages}{89--105}.
\bibitem[{Young et~al.(2010)Young, Scardovi, and Leonard}]{george1}
\bibinfo{author}{G.~F. Young}, \bibinfo{author}{L.~Scardovi},
  \bibinfo{author}{N.~E. Leonard},
\newblock \bibinfo{title}{Robustness of noisy consensus dynamics with directed
  communication},
\newblock in: \bibinfo{booktitle}{Proceedings of the American Control
  Conference (2010)}, pp. \bibinfo{pages}{6312--6317}.
\bibitem[{Ballerini et~al.(2008)Ballerini, Cabibbo, Candelier, Cavagna,
  Cisbani, Giardina, Lecomte, Orlandi, Parisi, Procaccini et~al.}]{star1}
\bibinfo{author}{M.~Ballerini}, \bibinfo{author}{N.~Cabibbo},
  \bibinfo{author}{R.~Candelier}, \bibinfo{author}{A.~Cavagna},
  \bibinfo{author}{E.~Cisbani}, \bibinfo{author}{I.~Giardina},
  \bibinfo{author}{V.~Lecomte}, \bibinfo{author}{A.~Orlandi},
  \bibinfo{author}{G.~Parisi}, \bibinfo{author}{A.~Procaccini}, et~al.,
\newblock \bibinfo{title}{Interaction ruling animal collective behavior depends
  on topological rather than metric distance: Evidence from a field study},
\newblock \bibinfo{journal}{Proceedings of the National Academy of Sciences}
  \bibinfo{volume}{105} (\bibinfo{year}{2008}) \bibinfo{pages}{1232--1237}.
\bibitem[{Rahmani et~al.(2009)Rahmani, Ji, Mesbahi, and Egerstedt}]{lead1}
\bibinfo{author}{A.~R. Rahmani}, \bibinfo{author}{M.~Ji},
  \bibinfo{author}{M.~Mesbahi}, \bibinfo{author}{M.~B. Egerstedt},
\newblock \bibinfo{title}{Controllability of multi-agent systems from a
  graph-theoretic perspective},
\newblock \bibinfo{journal}{SIAM Journal of Control and Optimization}
  \bibinfo{volume}{48} (\bibinfo{year}{2009}) \bibinfo{pages}{162--186}.
\bibitem[{Guttal and Couzin(2011)}]{guttal2}
\bibinfo{author}{V.~Guttal}, \bibinfo{author}{I.~D. Couzin},
\newblock \bibinfo{title}{Leadership, collective motion and the evolution of
  migratory strategies},
\newblock \bibinfo{journal}{Communicative \& Integrative Biology}
  \bibinfo{volume}{4} (\bibinfo{year}{2011}) \bibinfo{pages}{294--298}.
\bibitem[{Vincent and Vincent(2000)}]{vincent}
\bibinfo{author}{T.~L. Vincent}, \bibinfo{author}{T.~L.~S. Vincent},
\newblock \bibinfo{title}{Evolution and control system design: The evolutionary
  game},
\newblock \bibinfo{journal}{Control Systems Magazine, IEEE}
  \bibinfo{volume}{20} (\bibinfo{year}{2000}) \bibinfo{pages}{20--35}.
\bibitem[{Pais(2012)}]{PaisThesis}
\bibinfo{author}{D.~Pais}, \bibinfo{title}{Emergent Collective Behavior in
  Multi-Agent Systems: An Evolutionary Perspective}, Ph.D. thesis, Princeton
  University, \bibinfo{year}{2012}.
\bibitem[{Uhlenbeck and Ornstein(1930)}]{ou}
\bibinfo{author}{G.~E. Uhlenbeck}, \bibinfo{author}{L.~S. Ornstein},
\newblock \bibinfo{title}{On the theory of the {Brownian} motion},
\newblock \bibinfo{journal}{Physical Review} \bibinfo{volume}{36}
  (\bibinfo{year}{1930}) \bibinfo{pages}{823}.
\bibitem[{Gardiner(2009)}]{gardiner}
\bibinfo{author}{C.~W. Gardiner}, \bibinfo{title}{Handbook of Stochastic
  Methods: for Physics, Chemistry and the Natural Sciences},
  \bibinfo{publisher}{Springer}, \bibinfo{edition}{4th} edition,
  \bibinfo{year}{2009}.
\bibitem[{Ren et~al.(2005)Ren, Beard, and McLain}]{con1}
\bibinfo{author}{W.~Ren}, \bibinfo{author}{R.~Beard},
  \bibinfo{author}{T.~McLain},
\newblock \bibinfo{title}{Coordination variables and consensus building in
  multiple vehicle systems},
\newblock \bibinfo{journal}{V. Kumar, N.E. Leonard, A.S. Morse (Eds.),
  Cooperative Control: Lecture Notes in Control and Information Sciences}
  (\bibinfo{year}{2005}) \bibinfo{pages}{439--442}.
\bibitem[{Olfati-Saber et~al.(2007)Olfati-Saber, Fax, and Murray}]{con2}
\bibinfo{author}{R.~Olfati-Saber}, \bibinfo{author}{J.~A. Fax},
  \bibinfo{author}{R.~M. Murray},
\newblock \bibinfo{title}{Consensus and cooperation in networked multi-agent
  systems},
\newblock \bibinfo{journal}{Proceedings of the IEEE} \bibinfo{volume}{95}
  (\bibinfo{year}{2007}) \bibinfo{pages}{215--233}.
\bibitem[{Jadbabaie et~al.(2003)Jadbabaie, Lin, and Morse}]{con3}
\bibinfo{author}{A.~Jadbabaie}, \bibinfo{author}{J.~Lin},
  \bibinfo{author}{A.~S. Morse},
\newblock \bibinfo{title}{Coordination of groups of mobile autonomous agents
  using nearest neighbor rules},
\newblock \bibinfo{journal}{IEEE Transactions on Automatic Control}
  \bibinfo{volume}{48} (\bibinfo{year}{2003}) \bibinfo{pages}{988--1001}.
\bibitem[{Moore and Lucarelli(2005)}]{Moore2005}
\bibinfo{author}{K.~Moore}, \bibinfo{author}{D.~Lucarelli},
\newblock \bibinfo{title}{Forced and constrained consensus among cooperating
  agents},
\newblock in: \bibinfo{booktitle}{Proceedings of the IEEE Conference on
  Networking, Sensing and Control (2005)}, pp. \bibinfo{pages}{449--454}.
\bibitem[{Ren(2007)}]{ren}
\bibinfo{author}{W.~Ren},
\newblock \bibinfo{title}{Multi-vehicle consensus with a time-varying reference
  state},
\newblock \bibinfo{journal}{Systems \& Control Letters} \bibinfo{volume}{56}
  (\bibinfo{year}{2007}) \bibinfo{pages}{474--483}.
\bibitem[{Lin et~al.(2005)Lin, Francis, and Maggiore}]{fran}
\bibinfo{author}{Z.~Lin}, \bibinfo{author}{B.~Francis},
  \bibinfo{author}{M.~Maggiore},
\newblock \bibinfo{title}{Necessary and sufficient graphical conditions for
  formation control of unicycles},
\newblock \bibinfo{journal}{IEEE Transactions on Automatic Control}
  \bibinfo{volume}{50} (\bibinfo{year}{2005}) \bibinfo{pages}{121--127}.
\bibitem[{Diekmann(2004)}]{adyn3}
\bibinfo{author}{O.~Diekmann},
\newblock \bibinfo{title}{A beginner's guide to adaptive dynamics},
\newblock \bibinfo{journal}{Banach Center Publications} \bibinfo{volume}{63}
  (\bibinfo{year}{2004}) \bibinfo{pages}{47--86}.
\bibitem[{Mitchell(1998)}]{garef}
\bibinfo{author}{M.~Mitchell}, \bibinfo{title}{An Introduction to Genetic
  Algorithms}, \bibinfo{publisher}{Bradford}, \bibinfo{year}{1998}.
\bibitem[{Erd{\H{o}}s and R{\'e}nyi(1960)}]{random}
\bibinfo{author}{P.~Erd{\H{o}}s}, \bibinfo{author}{A.~R{\'e}nyi},
\newblock \bibinfo{title}{On the evolution of random graphs},
\newblock in: \bibinfo{booktitle}{Publication of the Mathematical Institute of
  the Hungarian Academy of Sciences (1960)}, pp. \bibinfo{pages}{17--61}.
\bibitem[{Newman(2000)}]{newman}
\bibinfo{author}{M.~E.~J. Newman},
\newblock \bibinfo{title}{Models of the small world},
\newblock \bibinfo{journal}{Journal of Statistical Physics}
  \bibinfo{volume}{101} (\bibinfo{year}{2000}) \bibinfo{pages}{819--841}.
\bibitem[{Newman et~al.(2001)Newman, Strogatz, and Watts}]{randd1}
\bibinfo{author}{M.~E.~J. Newman}, \bibinfo{author}{S.~H. Strogatz},
  \bibinfo{author}{D.~J. Watts},
\newblock \bibinfo{title}{Random graphs with arbitrary degree distributions and
  their applications},
\newblock \bibinfo{journal}{Physical Review E} \bibinfo{volume}{64}
  (\bibinfo{year}{2001}) \bibinfo{pages}{026118--026133}.
\bibitem[{Albert and Barab\'asi(2002)}]{randd2}
\bibinfo{author}{R.~Albert}, \bibinfo{author}{A.~L. Barab\'asi},
\newblock \bibinfo{title}{Statistical mechanics of complex networks},
\newblock \bibinfo{journal}{Rev. Mod. Phys.} \bibinfo{volume}{74}
  (\bibinfo{year}{2002}) \bibinfo{pages}{47--97}.
\bibitem[{Watts and Strogatz(1998)}]{wattsstrogatz}
\bibinfo{author}{D.~J. Watts}, \bibinfo{author}{S.~H. Strogatz},
\newblock \bibinfo{title}{{Collective dynamics of `small-world' networks}},
\newblock \bibinfo{journal}{Nature} \bibinfo{volume}{393}
  (\bibinfo{year}{1998}) \bibinfo{pages}{440--442}.

\end{thebibliography}


\end{document}